\documentclass[conference]{IEEEtran}
\usepackage{balance}
\usepackage{cite}
\usepackage{amsmath,amssymb,amsfonts}
\usepackage{ntheorem}

\newtheorem{theorem}{Theorem}
\newtheorem{definition}{Definition}
\newtheorem{lemma}{Lemma}
\newtheorem{corollary}{Corollary}

\newtheorem{conjecture}{Conjecture}
\newenvironment{proof}{\noindent\textbf{Proof. }}{\hfill$\Box$}
\usepackage{graphicx}
\usepackage{xcolor}
\usepackage{booktabs}
\usepackage{array}
\usepackage{url}
\usepackage{xspace}
\usepackage{tikz}
\usetikzlibrary{arrows.meta,positioning,fit,calc,shapes.geometric}
\usepackage[hidelinks]{hyperref}
\usepackage{marvosym}
\usepackage[hidelinks]{hyperref}
\hypersetup{pdfauthor={Sirui Liu, Li Que, Zongpeng Li, Baochun Li},pdftitle={On the Multiple-Unicast Conjecture: \\Beyond Cut Metrics}}

\newcommand{\MET}{\operatorname{MET}}
\newcommand{\NC}{\mathsf{NC}}
\newcommand{\MCF}{\mathsf{MCF}}
\newcommand{\holes}{\mathcal{F}}

\newcommand{\etal}{{\em et al.}\xspace}
\definecolor{sessionblue}{RGB}{35,90,170}
\definecolor{sessionorange}{RGB}{222,112,28}
\definecolor{holecolor}{RGB}{190,35,45}
\definecolor{softblue}{RGB}{226,238,252}
\definecolor{softorange}{RGB}{253,237,214}
\definecolor{softgreen}{RGB}{226,243,229}

\begin{document}

\title{On the Multiple-Unicast Conjecture:\\ Beyond Cut Metrics}

\author{
Sirui Liu$^{\dagger,\ddagger}$,
Li Que$^{\dagger}$,
\href{mailto:zongpeng@tsinghua.edu.cn}{Zongpeng Li}$^{\ddagger,\dagger \,}$%
\href{mailto:zongpeng@tsinghua.edu.cn}{\textsuperscript{\Letter}},
Baochun Li$^{\S}$
\\
$^{\dagger}$\emph{Tsinghua University}
\qquad
$^{\ddagger}$\emph{Quan Cheng Laboratory}
\qquad
$^{\S}$\emph{University of Toronto}
}

\maketitle

\begin{abstract}
Network coding allows intermediate nodes to encode received messages before
transmission. The multiple-unicast conjecture asserts that coding has no
throughput advantage over fractional routing for independent unicast sessions
in any undirected network. Despite more than two decades of sustained study,
this central open problem remains unresolved. The conjecture is deeply
connected to computational complexity: a proof would yield long-sought lower
bounds for fundamental problems. To study the conjecture, this paper develops
a unified metric framework from the perspective that the basic objects behind
the comparison between coding and routing are not cuts alone, but graph
metrics. Using this framework, we prove the conjecture for three new classes
of undirected networks: (a)~networks with at most five terminal locations;
(b)~planar networks whose terminal locations lie on the boundaries of at most
three designated faces, with each session's endpoints on one such face; and
(c)~networks with arbitrarily many nodes and terminal locations under a
structural restriction on session endpoints. We give a new proof that the
conjecture holds for networks with at most six coding nodes, without
computer-aided search, and show that if the conjecture holds on
\(\Gamma_{3,3}\), then it holds whenever no three sessions have six distinct
terminal locations.
\end{abstract}

\begin{IEEEkeywords}
network coding, multiple unicast, multicommodity flow, Li--Li conjecture
\end{IEEEkeywords}

\section{Introduction}
\label{sec:introduction}

Network coding allows intermediate nodes to encode received messages before
transmission~\cite{ahlswede2000network,koetter2003algebraic}. The
benefits and limitations of network coding are well understood in
unicast~\cite{sengupta2007analysis}, broadcast~\cite{fragouli2006broadcast},
and multicast~\cite{yin2013graphminor,yin2014matroid,braverman2025multicast}
communication. However, the multiple-unicast setting is far more intricate because an
intermediate node may encode messages from different sessions into a single
transmitted symbol. In directed multiple-unicast networks, coding can provide
a throughput advantage over fractional routing. Whether the advantage exists
in undirected multiple-unicast networks is one of the most important open
problems in network coding. Despite a large body of work over more than two decades, known results apply only to structurally restricted classes of
networks.

\begin{figure}[!b]
\centering
\resizebox{.9\columnwidth}{!}{%
\begin{tikzpicture}[
 >=Latex,
 terminal/.style={
   circle,
   draw,
   fill=white,
   inner sep=0pt,
   minimum size=6.4mm,
   font=\small
 },
 relay/.style={
   circle,
   draw,
   fill=white,
   inner sep=0pt,
   minimum size=5.2mm,
   font=\scriptsize
 },
 base/.style={
   draw=black!42,
   line width=.55pt
 },
 coding/.style={
   -{Latex[length=1.9mm]},
   draw=black,
   line width=.72pt,
   shorten <=1.4pt,
   shorten >=1.4pt
 },
 route1/.style={
   -{Latex[length=1.8mm]},
   draw=holecolor!85!black,
   line width=.90pt,
   shorten <=1.4pt,
   shorten >=1.4pt
 },
 route2/.style={
   -{Latex[length=1.8mm]},
   draw=teal!78!black,
   line width=.90pt,
   shorten <=1.4pt,
   shorten >=1.4pt
 },
 routeid/.style={
   fill=white,
   inner sep=.35pt,
   text=black,
   font=\scriptsize
 }
]

\begin{scope}
\node[font=\small] at (1.75,2.82) {(a) coding solution};

\node[terminal] (s1) at (0,2.15) {$s_1$};
\node[terminal] (s2) at (3.50,2.15) {$s_2$};
\node[terminal] (t2) at (0,0) {$t_2$};
\node[terminal] (t1) at (3.50,0) {$t_1$};
\node[relay] (u) at (1.75,1.62) {$u$};
\node[relay] (v) at (1.75,.53) {$v$};

\foreach \a/\b in
  {s1/u,s2/u,u/v,s1/t2,s2/t1,v/t2,v/t1}
  {\draw[base] (\a)--(\b);}

\draw[coding] (s1) to[bend left=10]
  node[pos=.48,above,font=\scriptsize] {$X_1$} (u);

\draw[coding] (s2) to[bend right=10]
  node[pos=.48,above,font=\scriptsize] {$X_2$} (u);

\draw[coding] (u) to[bend right=10]
  node[pos=.52,left,font=\scriptsize] {$X_1\!\oplus\!X_2$} (v);

\draw[coding] (s1) to[bend right=10]
  node[pos=.52,left,font=\scriptsize] {$X_1$} (t2);

\draw[coding] (s2) to[bend left=10]
  node[pos=.52,right,font=\scriptsize] {$X_2$} (t1);

\draw[coding] (v) to[bend left=12]
  node[pos=.53,below,font=\scriptsize] {$X_1\!\oplus\!X_2$} (t2);

\draw[coding] (v) to[bend right=12]
  node[pos=.53,below,font=\scriptsize] {$X_1\!\oplus\!X_2$} (t1);
\end{scope}

\begin{scope}[xshift=5.25cm]
\node[font=\small] at (1.75,2.82)
  {(b) fractional-routing solution};

\node[terminal] (S1) at (0,2.15) {$s_1$};
\node[terminal] (S2) at (3.50,2.15) {$s_2$};
\node[terminal] (T2) at (0,0) {$t_2$};
\node[terminal] (T1) at (3.50,0) {$t_1$};
\node[relay] (U) at (1.75,1.62) {$u$};
\node[relay] (V) at (1.75,.53) {$v$};

\foreach \a/\b in
  {S1/U,S2/U,U/V,S1/T2,S2/T1,V/T2,V/T1}
  {\draw[base] (\a)--(\b);}

\draw[route1] (S1) to[bend left=10]
  node[pos=.35,above,routeid] {$1$} (U);

\draw[route1] (U) to[bend left=10] (S2);

\draw[route1] (S2) to[bend left=8] (T1);

\draw[route1] (S1) to[bend right=8]
  node[pos=.38,left,routeid] {$1$} (T2);

\draw[route1] (T2) to[bend left=14] (V);

\draw[route1] (V) to[bend left=14] (T1);

\draw[route2] (S2) to[bend left=10]
  node[pos=.35,above,routeid] {$2$} (U);

\draw[route2] (U) to[bend left=10] (S1);

\draw[route2] (S1) to[bend left=11] (T2);

\draw[route2] (S2) to[bend right=11]
  node[pos=.38,right,routeid] {$2$} (T1);

\draw[route2] (T1) to[bend left=14] (V);

\draw[route2] (V) to[bend left=14] (T2);
\end{scope}

\end{tikzpicture}%
}

\caption{
A coding solution and a routing solution on the same undirected
unit-capacity network, adapted from~\cite{li2004network}.
Each session has rate one.
In~(a), node \(u\) sends the encoded symbol
\(X_1\!\oplus\!X_2\) to \(v\), which forwards it to both destinations.
In~(b), the two routes marked \(1\) connect \((s_1,t_1)\),
whereas the two routes marked \(2\) connect \((s_2,t_2)\).
Each route carries rate \(0.5\), so each session achieves rate one;
colors provide an additional visual cue.
}
\label{fig:coding-routing-undirected}
\end{figure}

The coding solution and fractional-routing solution shown in
Fig.~\ref{fig:coding-routing-undirected} both attain unit rate for each
session, illustrating that coding need not provide a throughput advantage in an undirected network. Li and
Li~\cite{li2004network} and, independently,
Harvey~\etal~\cite{harvey2004comparing} formulated the multiple-unicast
conjecture in 2004. For an edge-capacitated network, the \emph{capacity
region} is the set of rate vectors that can be simultaneously achieved
under the edge capacities. The conjecture asserts that, for independent unicast
sessions in every undirected network, the network-coding and fractional-routing capacity regions coincide.

Resolving this open problem would have important implications for
computational complexity. A proof of
the conjecture would yield long-sought lower bounds for external-memory
integer sorting~\cite{farhadi2019sorting}, for constant-degree Boolean
circuits for multiplication~\cite{afshani2019lower}, and for certain
restricted data-structure problems~\cite{dvorak2021data}. If the conjecture
fails, Braverman~\etal~\cite{braverman2017coding} showed that any throughput
advantage, however small, can be amplified into a polylogarithmic gap between
coding and fractional routing on a family of undirected networks. These
consequences make the conjecture central to both network coding and
computational complexity.

One approach to establishing the multiple-unicast conjecture for a given
network is through \emph{cut sufficiency}, the property that the cut condition
is sufficient for fractional routing. In an undirected multiple-unicast
network, the \emph{cut condition} requires the capacity of every cut to be at
least the total rate of the sessions it separates, and every rate vector
achievable by a coding solution satisfies this condition. Consequently, under
cut sufficiency, every rate vector achievable by coding is fractionally
routable; since fractional routing is a special case of coding, the
network-coding and fractional-routing capacity regions coincide.

Known cut-sufficiency results prove that the conjecture holds for two
sessions~\cite{hu1963multi} and for networks with at most four terminal
locations~\cite{seymour1980four}. For a planar network, fix a planar embedding and
then select a set of faces; each selected face is a \emph{hole}. In the planar
networks considered here, every terminal location lies on the boundary of these holes, and the two endpoints of each session lie on the boundary of the same hole; different sessions may use different holes. The conjecture holds in the
one-hole case~\cite{okamura1981multicommodity} and in the two-hole
case~\cite{okamura1983flows}. The next natural cases are networks with five terminal locations and the
three-hole planar case. In these two cases, the cut condition need not be
sufficient for fractional routing~\cite{karzanov1990sums,karzanov1994metrics}.
Before this work, both cases were open; here we prove that the conjecture holds in each.

Failure of cut sufficiency does not imply that coding has a throughput
advantage. Jain~\etal~\cite{jain2006capacity} introduced the Input--Output and
Crypto inequalities and used them to prove the conjecture for their
four-session cyclic instance on the five-node graph \(K_{3,2}\), where cut
sufficiency fails. Thus, cut insufficiency already appears at five nodes, but
it does not produce a coding advantage. A broader result is known when the
number of coding nodes is bounded. A \emph{coding node} may encode or decode
messages; all other nodes only forward them. Yin~\etal~\cite{yin2018reduction}
proved the conjecture for all networks with at most six coding nodes using a
reduction approach and a computer-aided search. This remains the largest known
bound on the number of coding nodes. We give a new proof of the same conclusion, replacing computer-aided search
with the classification of six-point metrics.

Whereas the six-node result bounds the number of coding nodes, a different
structural direction allows arbitrarily many nodes and terminal locations
and restricts only the pattern of session endpoints. The demand graph \(D\) has one vertex for each terminal location and one
edge for each distinct unordered pair of session endpoints. Its matching
number \(\nu(D)\) is the largest number of edges no two of which share an
endpoint. Thus, \(\nu(D)\le2\) means that no three sessions have six distinct
terminal locations among their endpoints. The structural complexity of routing
changes sharply once the matching number exceeds two, even when all edge
capacities are integers~\cite[Sec.~1]{hirai2014maximum}. Consequently, the
condition \(\nu(D)\le2\) singles out a natural and important family of demand
graphs. However, cut sufficiency does not hold throughout this
family~\cite[Sec.~1]{hirai2010metric}. We prove the conjecture for a new class in this
family with arbitrarily many nodes and terminal locations, and show that if the
conjecture holds for every multiple-unicast instance on the fixed supply graph
\(\Gamma_{3,3}\), then it holds for every instance~in~the~family.

We develop a unified metric framework that underlies all of our results and
provides a reusable tool for studying further instances of the conjecture. Its
central perspective is that the basic objects behind the comparison between
coding and routing are not cuts alone, but graph metrics. The framework decomposes a network's shortest-path metric into a weighted combination of graph metrics obtained from a collection of fixed graphs. When one of these graphs is \(K_2\), the corresponding graph metrics are exactly cut metrics, making the cut-sufficiency approach a special case of the framework. We show that coding cannot outperform routing on each fixed graph. Our Transfer Lemma lifts these comparisons to any network with such a decomposition; the weighted decomposition then combines them to prove the conjecture for that network. Our contributions are as follows.

\begin{itemize}
  \item We develop a unified framework for systematically proving the
multiple-unicast conjecture for new network classes beyond cut sufficiency.

  \item We prove the conjecture for three new network classes:
    (a) undirected networks with at most five terminal locations;
    (b) planar networks with a fixed embedding whose terminal locations lie on
    the boundaries of at most three designated faces, with each session's
    endpoints on the same designated face; and (c) undirected networks with
    arbitrarily many nodes and terminal locations whose demand graph is a
    subgraph of the union of \(K_3\) and~a~star.

  \item We give a new proof that the conjecture holds for
  networks with at most six coding nodes.

  \item We show that, if the conjecture holds on \(\Gamma_{3,3}\), then
  it holds whenever no three sessions have six distinct terminal locations
  among their endpoints.
\end{itemize}

\section{Related Work}
\label{sec:related}

\subsection{Known Results from Cut Sufficiency}

Cut sufficiency gives several known cases of the multiple-unicast conjecture.
Every coding solution satisfies the cut condition; therefore, whenever the cut
condition is sufficient for fractional routing, the conjecture holds for that
instance. Hu~\cite{hu1963multi} proved this for two sessions, and
Seymour~\cite{seymour1980four} for instances with at most four terminal
locations.

Several planar results give further such cases. Seymour~\cite{seymour1981plane}
proved cut sufficiency when a network and its demand graph can be drawn
together in the plane without crossings. Okamura and Seymour~\cite{okamura1981multicommodity}
proved the one-hole case, and Okamura~\cite{okamura1983flows} proved the
two-hole case. For series-parallel networks, Chakrabarti~\etal~\cite{chakrabarti2012cut}
gave a complete characterization. The cut condition is sufficient for
fractional routing exactly when the network together with its demand graph
excludes an odd-spindle minor. In the five-terminal and three-hole cases, the
cut condition need not be sufficient for fractional routing, motivating the
graph-metric framework developed here.

\subsection{Metric Packing}

A graph metric is obtained by choosing a graph \(Q\), mapping each node of a
network to a node of \(Q\), and taking the shortest-path distance between the
images. When \(Q=K_2\), this gives a cut metric: nodes with the same image have
distance zero, and nodes with different images have distance one. Given
nonnegative lengths on the network edges, a metric packing is a nonnegative
combination of graph metrics whose value on every network edge is at most the
assigned length and whose value between the endpoints of each session equals
their shortest-path distance. Metric
packing concerns fractional routing; it does not itself compare coding with routing. Our framework developed here provides the new
connection that allows these known packings to be used in proofs of the multiple-unicast conjecture.

Karzanov~\cite{karzanov1990sums} showed that, when the sessions use at most
five terminal locations, the shortest-path distances between their endpoints
admit a packing by cut metrics and \(K_{2,3}\)-metrics. For a planar graph with
three holes, Karzanov~\cite{karzanov1994metrics} obtained a packing by the same
two types of metrics for terminal pairs whose endpoints lie on the boundary of
a common hole. Cut metrics alone do not suffice in general in either case.

Hirai~\cite{hirai2010metric} studied metric packing for the demand graph
\(K_3+K_3\), where \(\Gamma_{3,3}\)-metrics arise. Separately,
Grishukhin~\cite{grishukhin1992extreme} classified the extreme rays of the
metric cone on six points. Our unified framework establishes the coding--routing comparison for the graph
metrics that appear in these packing and classification results.

\subsection{Information-Theoretic Methods}

Jain~\etal~\cite{jain2006capacity} introduced the Input--Output and Crypto
inequalities and proved that the conjecture holds for their \(K_{3,2}\) cyclic
instance with four unicast sessions. Harvey~\etal~\cite{harvey2006capacity}
introduced informational dominance, a relation between edge sets, and combined
it with entropy inequalities to prove equality of the network coding and
fractional-routing rates for an infinite class of special bipartite instances. Yin~\etal~\cite{yin2018reduction} combined a
reduction approach with a computer-aided search to prove that the conjecture
holds for networks with at most six coding nodes. Qureshi and
Thakor~\cite{qureshi2022partition} later proved that the conjecture holds for
two infinite families of complete multipartite networks.

Our Transfer Lemma provides the connection between a unit-length graph case where the conjecture holds and the corresponding graph-metric comparison in our framework.

\subsection{Space Information Flow}

Li and Wu~\cite{liwu2012space} introduced space information flow, which
compares coding and routing costs in a metric space rather than on a fixed
network graph. Xiahou~\etal~\cite{xiahou2014geometric} used a single metric
embedding to compare a graph instance with a metric-space instance. If the
embedding preserves the shortest-path distance between the endpoints of every
session and does not increase the distance between the endpoints of any
network edge, equality of the coding and routing costs in the metric-space
instance implies the same equality in the graph instance.

Liu and Li~\cite{liu2025space} proved that the coding and routing costs are
equal in \(\ell_p^n\) for \(1\le p\le2\), using almost-isometric embeddings
of \(\ell_p^n\) into \(\ell_1\). Since every finite \(\ell_1\)
semimetric is a nonnegative combination of cut metrics~\cite{gupta2004cuts},
the final one-dimensional reduction of Xiahou~\etal~\cite{xiahou2014geometric}
can be viewed as a cut-metric packing. Our framework also permits non-cut
graph metrics: the \(K_{2,3}\)-metrics needed in the five-terminal and
three-hole cases~\cite{karzanov1990sums,karzanov1994metrics} cannot be
expressed as nonnegative combinations of cut metrics.

\subsection{Session Dominance}

Liu~\etal~\cite{liu2026dominance} introduced session dominance, which uses
shortest-path containment to compare collections of sessions. When a simpler
collection dominates another collection, a proof that the conjecture holds for
the dominating collection also applies to the other collection. Session dominance yields reductions for three-session instances and identifies
an infinite class specified by network topology, which neither contains nor is
contained in any of the network classes proved here.

\subsection{Strongly Reachable \texorpdfstring{$k$}{k}-Pair Networks}

A substantial complementary line of work studies the Langberg--M\'edard
multiple-unicast conjecture~\cite{langberg2009multiple}. It concerns directed acyclic $k$-pair networks that are strongly reachable: for every receiver $t_j$, there exist $k$ pairwise arc-disjoint directed paths, one from each source $s_i$ to $t_j$. The conjecture states that the underlying undirected network
admits a fractional multi-commodity flow of rate one for every pair.

Cai and Han~\cite{cai2022threepair} proved the conjecture for every strongly
reachable three-pair network. Cai and Han~\cite{cai2024backbone} also proved
it for strongly reachable networks supported on a collapsed backbone in the
form of a rooted binary tree. Taken together, these results establish strongly reachable \(k\)-pair networks
as a well-developed and structurally distinctive setting for studying when
fractional routing suffices for multiple-unicast communication. This line is complementary to our work: it imposes a directed reachability condition and derives an undirected routing conclusion, while our results impose conditions on the undirected instance through terminal locations, a fixed planar embedding, or the pattern of session endpoints.
\section{Preliminaries}\label{sec:prelim}

\subsection{Network Model}

Let \(G=(V,E)\) be a finite connected undirected graph, called the
\emph{supply graph}, and let \(c:E\to\mathbb{R}_{\ge 0}\) be an
edge-capacity assignment. Let \(\mathcal H=(h_1,\ldots,h_k)\) denote
\(k\) independent unicast sessions. Session \(h_i\) has source node
\(s_i\), receiver node \(t_i\ne s_i\), and rate \(r_i\). A rate vector is
\(\mathbf r=(r_1,\ldots,r_k)\).

Different sessions may have the same ordered pair of endpoints. Let
\(T:=\{s_i,t_i:1\le i\le k\}\) be the set of terminal locations, and let
\(\mathcal P(\mathcal H)\) be the set of distinct unordered terminal pairs
\(\{s_i,t_i\}\). The \emph{demand graph} is the simple
undirected graph \(D(\mathcal H):=(T,\mathcal P(\mathcal H))\). Its matching number
\(\nu(D(\mathcal H))\) is the maximum number of edges no two of which share
a terminal location. It records which unordered
terminal pairs occur among the sessions, but not their orientations, rates, or
multiplicities.

We use the standard zero-error network-coding model. In a coding solution,
each node may transmit functions of the source messages available at that node
and of messages received earlier, and each destination node must recover its
requested source message. For every supply edge \(e=uv\), let \(f(\overrightarrow{uv})\) and
\(f(\overrightarrow{vu})\) denote the average transmission rates of a coding
solution from \(u\) to \(v\) and from \(v\) to \(u\), respectively. The total
transmission rate on \(e\) is
\(f(e):=f(\overrightarrow{uv})+f(\overrightarrow{vu})\). The coding solution
achieves \(\mathbf r\) if it delivers the source message of session \(h_i\)
from \(s_i\) to \(t_i\) at rate \(r_i\) for every \(i\). The coding solution
is feasible under \(c\) if \(f(e)\le c(e)\) for every \(e\in E\).

Fractional routing does not encode messages from different sessions together.
A rate vector \(\mathbf r\) is achievable by fractional routing if
nonnegative rates can be assigned to the \(s_i\)--\(t_i\) paths of each
session \(h_i\) so that the rates assigned to these paths sum to \(r_i\) for
every \(i\), and, for every edge \(e\), the rates assigned to all paths
containing \(e\) sum to at most \(c(e)\).

The network-coding capacity region \(\mathcal C_{\NC}(G,c,\mathcal H)\) is
the set of all rate vectors that can be simultaneously achieved by coding
solutions under the edge capacities, and the routing capacity region
\(\mathcal C_{\MCF}(G,c,\mathcal H)\) is defined analogously for fractional
routing. Since routing is a special case of coding,
\(\mathcal C_{\MCF}(G,c,\mathcal H)\subseteq
\mathcal C_{\NC}(G,c,\mathcal H)\).

\subsection{Throughput Domain}

The multiple-unicast conjecture has the following throughput-domain
form~\cite{li2004network,harvey2004comparing}.

\begin{conjecture}[Throughput Domain]\label{conj:throughput}
For every supply graph \(G\), every edge-capacity assignment \(c\), and any
independent unicast sessions \(\mathcal H\),
\begin{equation}\label{eq:throughput-conjecture}
  \mathcal C_{\NC}(G,c,\mathcal H)
  =
  \mathcal C_{\MCF}(G,c,\mathcal H).
\end{equation}
\end{conjecture}

\subsection{Cost Domain}

The cost-domain form compares the transmission cost of a coding solution with
the minimum cost of fractional routing for the same rate vector. Let
\(\ell:E\to\mathbb{R}_{\ge 0}\) be a nonnegative edge-cost assignment, and
let \(d_{G,\ell}(u,v)\) denote the shortest-path distance between \(u\) and
\(v\) under \(\ell\). Under \(\ell\), the transmission cost of a coding
solution is \(\sum_{e\in E}\ell(e)f(e)\).

\begin{conjecture}[Cost Domain]\label{conj:cost}
For every supply graph \(G\), any independent unicast sessions
\(\mathcal H\), every nonnegative edge-cost assignment \(\ell\), and every
coding solution that achieves \(\mathbf r\),
\begin{equation}\label{eq:cost-conjecture}
  \sum_{e\in E}\ell(e)f(e)
  \ge
  \sum_{i=1}^k r_i d_{G,\ell}(s_i,t_i).
\end{equation}
\end{conjecture}

The right-hand side of~\eqref{eq:cost-conjecture} is the minimum cost of
fractional routing that delivers \(\mathbf r\).

Li and Li~\cite{li2004network} established the following equivalence.

\begin{theorem}\label{thm:cost-equivalence}
Conjectures~\ref{conj:throughput} and~\ref{conj:cost} are equivalent.
\end{theorem}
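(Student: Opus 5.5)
The plan is to prove both directions via LP duality for fractional multicommodity flow. The key tool is the metric (Japanese) theorem: a rate vector \(\mathbf r\) lies in \(\mathcal C_{\MCF}(G,c,\mathcal H)\) if and only if, for every nonnegative length assignment \(\ell\),
\begin{equation*}
  \sum_{e\in E}\ell(e)c(e)\ \ge\ \sum_{i=1}^k r_i\, d_{G,\ell}(s_i,t_i).
\end{equation*}
I would derive this from Farkas' lemma applied to the path-formulation LP of fractional routing. The dual variables are edge lengths \(\ell\ge 0\) and session potentials \(y_i\). The dual constraints force \(y_i\le d_{G,\ell}(s_i,t_i)\), and infeasibility is equivalent to some \(\ell\) violating the inequality above.

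\emph{Cost domain implies throughput domain.} Let \(\mathbf r\in\mathcal C_{\NC}(G,c,\mathcal H)\), witnessed by a feasible coding solution with \(f(e)\le c(e)\). For any \(\ell\ge 0\), Conjecture~\ref{conj:cost} gives
\begin{equation*}
  \sum_e \ell(e)c(e)\ \ge\ \sum_e\ell(e)f(e)\ \ge\ \sum_i r_i d_{G,\ell}(s_i,t_i).
\end{equation*}
By the metric theorem, \(\mathbf r\in\mathcal C_{\MCF}\). The reverse inclusion \(\mathcal C_{\MCF}\subseteq\mathcal C_{\NC}\) is already noted in the paper, so the two regions are equal.

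\emph{Throughput domain implies cost domain.} Take a coding solution achieving \(\mathbf r\) with edge usage \(f\), and define capacities \(c:=f\). The solution is trivially feasible under \(c\), so \(\mathbf r\in\mathcal C_{\NC}(G,f,\mathcal H)=\mathcal C_{\MCF}(G,f,\mathcal H)\) by Conjecture~\ref{conj:throughput}. Applying the easy direction of the metric theorem with the given costs \(\ell\) yields \(\sum_e\ell(e)f(e)\ge\sum_i r_i d_{G,\ell}(s_i,t_i)\). Concretely, decompose the routing into paths: each unit of session \(i\) travels a path of \(\ell\)-length at least \(d_{G,\ell}(s_i,t_i)\), and the total load on each edge is at most \(f(e)\).

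The main obstacle is making the duality step rigorous in this setting, which has three parts.
\begin{enumerate}
\item \emph{Coding rates are asymptotic averages.} Capacity regions are defined via rates, so \(\mathcal C_{\NC}\) may need to be treated as a closure. Because \(\mathcal C_{\MCF}\) is a closed polyhedron, this causes no problem: the inequalities pass to limits, and the first direction only ever uses a single coding solution's \(f\).
\item \emph{Capacity-zero edges and the convention \(f(e)\le c(e)\).} These need to be handled with care.
\item \emph{Well-posedness of the dual.} I would first check that the dual LP is well posed, meaning the path set may be restricted to simple paths so that it is finite, and only then apply Farkas' lemma.
\end{enumerate}

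The remainder of both directions is routine substitution.
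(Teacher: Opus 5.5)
Your proposal is correct and follows the paper's proof. For throughput $\Rightarrow$ cost you set the capacities equal to the coding solution's usage $f$ and bound the resulting routing's cost from both sides. For cost $\Rightarrow$ throughput you use the LP-duality (Japanese theorem) certificate $\ell$; the paper phrases this as a contradiction, but it is the same argument, and the caveats you list (closure of $\mathcal C_{\NC}$, zero capacities, finiteness of the path set) are harmless.
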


\begin{proof}
Assume Conjecture~\ref{conj:throughput} and assign capacity \(f(e)\) to every
supply edge \(e\). The coding solution remains feasible, so \(\mathbf r\) is
fractionally routable under these capacities; its routing cost is at most the
left-hand side of~\eqref{eq:cost-conjecture}, while every routing of
\(\mathbf r\) has cost at least the right-hand side. Conversely, suppose that a coding solution feasible under some capacity
assignment \(c\) achieves \(\mathbf r\), but \(\mathbf r\) is not
fractionally routable under \(c\). Therefore, the dual of the fractional-routing linear program
supplies a nonnegative \(\ell\) for which the right-hand side of
\eqref{eq:cost-conjecture} exceeds \(\sum_{e\in E}\ell(e)c(e)\). Since
\(f(e)\le c(e)\) for every \(e\), this contradicts
Conjecture~\ref{conj:cost}.
\end{proof}

\subsection{Graph Metrics and Metric Packings}

A \emph{semimetric} on \(V\) is a nonnegative symmetric function
\(m:V\times V\to\mathbb{R}_{\ge 0}\) with \(m(u,u)=0\) that satisfies
\(m(u,v)\le m(u,w)+m(w,v)\) for all \(u,v,w\in V\). 

Let \(Q\) be a connected graph whose edges have unit length, and let \(d_Q\)
denote its shortest-path distance.

\begin{definition}[\(Q\)-metric]\label{def:qmetric}
Let \(\phi:V(G)\to V(Q)\) be a map. The \(Q\)-metric induced by \(\phi\) is
\begin{equation}\label{eq:qmetric}
  m_{Q,\phi}(u,v)
  :=
  d_Q\bigl(\phi(u),\phi(v)\bigr).
\end{equation}
The map \(\phi\) may identify several nodes of \(G\).
\end{definition}

A \(K_2\)-metric is a cut metric: it assigns distance one to pairs separated
by the cut induced by the map and zero to all other pairs. A
\(K_{2,3}\)-metric is a \(Q\)-metric with \(Q=K_{2,3}\).

\begin{definition}[Fractional metric packing]\label{def:packing}
Let \(R\subseteq\binom{V}{2}\) be a set of terminal pairs. A fractional
metric packing for \((G,\ell,R)\) consists of \(Q_a\)-metrics
\(m_a=m_{Q_a,\phi_a}\) and weights \(\lambda_a\ge 0\) such that
\begin{align}
  \sum_a \lambda_a m_a(u,v)
  &\le \ell(uv)
  && \text{for every } uv\in E,
  \label{eq:packing-edge}\\
  \sum_a \lambda_a m_a(s,t)
  &= d_{G,\ell}(s,t)
  && \text{for every } \{s,t\}\in R.
  \label{eq:packing-pair}
\end{align}
\end{definition}

The first condition bounds the weighted metric sum by the cost of every supply
edge. The second condition matches the shortest-path distance of every pair in
\(R\).

For the planar results, every planar graph is equipped with a fixed planar
embedding. Fix a distinguished set \(\holes\) of faces of this embedding,
and call each face in \(\holes\) a \emph{hole}.

We use the following known metric-packing results.

\begin{theorem}[Known metric-packing results]
\label{thm:karzanov-packings}
Let \(G\) have a nonnegative edge-cost assignment \(\ell\), and let
\(R\subseteq\binom{V}{2}\) be a set of terminal pairs. Define
\(T_R:=\bigcup_{\{u,v\}\in R}\{u,v\}\) and \(D_R:=(T_R,R)\).
\begin{enumerate}
  \item If \(\lvert T_R\rvert\le5\), then there is a fractional metric
  packing for \((G,\ell,R)\) by \(K_2\)- and
  \(K_{2,3}\)-metrics~\cite[Sec.~3]{karzanov1990sums}.

  \item If \(G\) is a three-hole planar graph and the two endpoints of every
  pair in \(R\) lie on the boundary of a common hole, then there is a
  fractional metric packing for \((G,\ell,R)\) by \(K_2\)- and
  \(K_{2,3}\)-metrics~\cite[pp.~118--119]{karzanov2019packing}.

  \item If \(D_R\) is a subgraph of the union of two stars, then there is a
  fractional metric packing for \((G,\ell,R)\) by
  \(K_2\)-metrics~\cite[(1.2)]{karzanov1990sums}.

  \item If \(D_R\) is a subgraph of the union of a triangle and a
  star, then there is a fractional metric packing for \((G,\ell,R)\) by
  \(K_2\)- and \(K_{2,3}\)-metrics~
  \cite[Sec.~3]{karzanov1990sums}.
\end{enumerate}
\end{theorem}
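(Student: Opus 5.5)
The statement collects four packing results from the literature, so the proof consists of checking that each cited result, stated in Karzanov's own formulation, gives exactly the form of Definition~\ref{def:packing}. In the source papers the results are stated in dual form. For a supply graph \(G\), lengths \(\ell\), and a commodity graph \(D_R\), the distance \(d_{G,\ell}\) restricted to \(T_R\) is the pointwise maximum of graph semimetrics \(m\) on \(V\) with \(m(uv)\le\ell(uv)\) on supply edges. The theorem then asserts that the extreme such \(m\), relevant to the pairs in \(R\), can be decomposed into nonnegative combinations of cut metrics and, when needed, \(K_{2,3}\)-metrics. My plan has three steps. First, fix the common reduction. Next, verify the hypotheses of each cited theorem in turn. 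Finally, translate the conclusion back into conditions \eqref{eq:packing-edge}--\eqref{eq:packing-pair}.

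The common reduction runs as follows. Given \((G,\ell,R)\), consider the polyhedral cone of semimetrics \(m\) on \(V\) with \(m(uv)\le\ell(uv)\) for \(uv\in E\). The shortest-path metric \(d_{G,\ell}\) is the unique maximal element of this cone. Karzanov's theorems, stated as minimax or "metric packing" equalities, assert the following. There exist graph metrics \(m_a=m_{Q_a,\phi_a}\), with \(Q_a\in\{K_2,K_{2,3}\}\), and weights \(\lambda_a\ge0\) such that \(\sum_a\lambda_a m_a\le\ell\) edgewise and \(\sum_a\lambda_a m_a(s,t)=d_{G,\ell}(s,t)\) for all \(\{s,t\}\in R\). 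Whenever a cited result is phrased instead as the equality \(\max\sum_{st\in R}\mu_{st}d(s,t)=\min\) over the packing LP, I will apply LP duality for every nonnegative weight vector \(\mu\) on \(R\). Taking \(\mu\) strictly positive then forces equality \eqref{eq:packing-pair} simultaneously on all pairs, because each \(m_a\) satisfies \(m_a(s,t)\le d_{G,\ell}(s,t)\)-type bounds. Indeed, every packing satisfying \eqref{eq:packing-edge} has \(\sum_a\lambda_a m_a(s,t)\le d_{G,\ell}(s,t)\) by the triangle inequality along a shortest path.

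Then I go case by case. For item~1, with \(|T_R|\le5\), Karzanov's theorem in \cite{karzanov1990sums} covers every demand graph on five vertices. Smaller \(T_R\) is handled by adding dummy terminals or noting that Seymour-type four-terminal cases need only cuts. For item~3, a subgraph of the union of two stars is the Lomonosov--Karzanov case in which cuts suffice. Here I must check that taking a subgraph of \(D_R\) preserves the property, and it does, since dropping pairs only removes equality constraints. For item~4, a subgraph of \(K_3\) plus a star is again a case listed in \cite[Sec.~3]{karzanov1990sums}. For item~2, I appeal to the three-hole theorem in \cite{karzanov2019packing}, after checking that the fixed embedding and hole convention there match the one here.

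The main obstacle is this hypothesis matching. Karzanov often assumes integrality or Eulerian conditions for integral packings, or restricts to \(D_R\) exactly equal to a given graph rather than a subgraph. I would resolve this by citing the fractional versions, which carry no parity hypotheses, and by using the monotonicity under deleting pairs noted above.
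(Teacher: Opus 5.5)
Your proposal takes essentially the same approach as the paper. The paper gives no argument beyond citing Karzanov's packing theorems in the form of Definition~\ref{def:packing}, and your added bookkeeping is correct: monotonicity under deleting pairs, the fallback to cuts when \(\lvert T_R\rvert\le4\), and the use of a strictly positive weighting to force equality on every pair. One slip: an individual \(m_a\) need not satisfy \(m_a(s,t)\le d_{G,\ell}(s,t)\), since only \(\sum_a\lambda_a m_a\) does, which is what you actually use. Also, the set of semimetrics bounded by \(\ell\) on supply edges is a bounded polyhedron, not a cone.
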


\begin{figure*}[!t]
\centering
\begingroup
\definecolor{transferred}{RGB}{165,15,15}
\resizebox{.97\textwidth}{!}{%
\begin{tikzpicture}[
  x=.82cm,
  y=.86cm,
  line cap=round,
  line join=round,
  gedge/.style={draw=black!85,line width=.66pt},
  rededge/.style={draw=transferred!92!black,line width=1.00pt},
  gedgedashed/.style={
    draw=black!78,densely dashed,line width=.7pt},
  qpath/.style={draw=transferred!92!black,line width=1.10pt},
  qfinal/.style={draw=black!85,line width=.86pt},
  processarrow/.style={draw=black!58,line width=.64pt,->},
  vtx/.style={
    circle,draw=black!85,fill=white,
    line width=.54pt,minimum size=4.15mm,inner sep=0pt,font=\footnotesize
  },
  panel/.style={font=\footnotesize\bfseries},
  note/.style={font=\scriptsize,align=center}
]

\node[panel] at (2.10,4.10) {(a) Original \(G\)};
\coordinate (aaone) at (.40,2.55);
\coordinate (aatwo) at (.40,.95);
\coordinate (bbone) at (3.80,2.55);
\coordinate (bbtwo) at (3.80,.95);
\coordinate (ppone) at (2.10,3.22);
\coordinate (pptwo) at (2.10,1.85);
\coordinate (ppthree) at (2.10,.48);
\draw[gedge] (aaone)--(bbone);
\draw[gedge] (aatwo)--(bbtwo);
\draw[rededge] (ppone) to[out=0,in=0,looseness=1.10] (ppthree);
\draw[gedge] (bbone)--(ppone);
\draw[gedge] (bbtwo)--(ppthree);
\draw[gedge] (aaone)--(pptwo);
\draw[gedge] (bbone)--(pptwo);
\node[vtx] at (aaone)   {\(a_1\)};
\node[vtx] at (aatwo)   {\(a_2\)};
\node[vtx] at (bbone)   {\(b_1\)};
\node[vtx] at (bbtwo)   {\(b_2\)};
\node[vtx] at (ppone)   {\(p_1\)};
\node[vtx] at (pptwo)   {\(p_2\)};
\node[vtx] at (ppthree) {\(p_3\)};
\node[note,text=transferred!92!black] at (3.35,1.85) {\(e\)};

\draw[processarrow] (4.15,1.85)--(4.85,1.85);
\node[note] at (4.50,2.28) {\(\phi\)};

\node[panel] at (6.85,4.10) {(b) Identify nodes};
\coordinate (a2)  at (5.25,1.85);
\coordinate (b2)  at (8.45,1.85);
\coordinate (p12) at (6.85,3.05);
\coordinate (p22) at (6.85,1.85);
\coordinate (p32) at (6.85,.65);
\draw[gedge] (a2) to[bend left=46] (b2);
\draw[gedge] (a2) to[bend right=46] (b2);
\draw[rededge] (p12) to[out=0,in=0,looseness=1.12] (p32);
\draw[gedge] (b2) to[bend right=12] (p12);
\draw[gedge] (b2) to[bend left=12] (p32);
\draw[gedge] (a2)--(p22)--(b2);
\node[vtx] at (a2)  {\(a\)};
\node[vtx] at (b2)  {\(b\)};
\node[vtx] at (p12) {\(p_1\)};
\node[vtx] at (p22) {\(p_2\)};
\node[vtx] at (p32) {\(p_3\)};

\draw[processarrow] (8.75,1.85)--(9.55,1.85);

\node[panel] at (11.60,4.10) {(c) Replace \(e\) by \(P_e\)};
\coordinate (a3)  at (10.00,1.85);
\coordinate (b3)  at (13.20,1.85);
\coordinate (p13) at (11.60,3.05);
\coordinate (p23) at (11.60,1.85);
\coordinate (p33) at (11.60,.65);
\draw[gedge] (a3) to[bend left=46] (b3);
\draw[gedge] (a3) to[bend right=46] (b3);
\draw[gedgedashed] (p13) to[out=0,in=0,looseness=1.12] (p33);
\draw[gedge] (b3) to[bend right=12] (p13);
\draw[gedge] (b3) to[bend left=12] (p33);
\draw[gedge] (a3)--(p23)--(b3);
\draw[qpath] (p13)--(a3)--(p33);
\node[vtx] at (a3)  {\(a\)};
\node[vtx] at (b3)  {\(b\)};
\node[vtx] at (p13) {\(p_1\)};
\node[vtx] at (p23) {\(p_2\)};
\node[vtx] at (p33) {\(p_3\)};
\node[note,text=transferred!92!black] at (10.46,3.02) {\(P_e\)};

\draw[processarrow] (13.50,1.85)--(14.25,1.85);
\node[note] at (13.88,2.28) {\(\forall e\)};

\node[panel] at (16.35,4.10) {(d) \(Q=K_{2,3}\)};
\coordinate (a4)  at (14.75,1.85);
\coordinate (b4)  at (17.95,1.85);
\coordinate (p14) at (16.35,3.05);
\coordinate (p24) at (16.35,1.85);
\coordinate (p34) at (16.35,.65);
\draw[qfinal] (a4)--(p14);
\draw[qfinal] (p14)--(b4);
\draw[qfinal] (a4)--(p24);
\draw[qfinal] (p24)--(b4);
\draw[qfinal] (a4)--(p34);
\draw[qfinal] (p34)--(b4);
\node[vtx] at (a4)  {\(a\)};
\node[vtx] at (b4)  {\(b\)};
\node[vtx] at (p14) {\(p_1\)};
\node[vtx] at (p24) {\(p_2\)};
\node[vtx] at (p34) {\(p_3\)};

\end{tikzpicture}%
}
\endgroup
\caption{Construction behind the Transfer Lemma with \(Q=K_{2,3}\). In (a),
the map \(\phi\) sends \(a_1,a_2\) to \(a\), \(b_1,b_2\) to \(b\), and each
\(p_j\) to itself. The highlighted curve marks one representative supply edge
\(e\). In (b), nodes with the same image under \(\phi\) are identified;
parallel curves represent parallel edges created by this identification. Panel
(c) shows one replacement: the dashed black curve is \(e\) before replacement,
and the solid dark-red path \(P_e\), consisting of edges of \(Q\), replaces it.
After every supply edge is replaced in this way, panel (d) shows the resulting
graph \(Q\).}
\label{fig:transfer-lemma}
\end{figure*}

\section{A Unified Metric Framework}\label{sec:framework}

Our unified metric framework takes the perspective that the basic objects
behind the comparison between coding and routing are not cuts alone, but graph
metrics. For each graph metric in a metric packing, we establish on the
associated unit-length graph \(Q\) that each coding solution has transmission
cost at least the minimum fractional-routing cost for the same rate vector.
Our Transfer Lemma transfers this inequality to the induced \(Q\)-metric on
the supply graph; the metric packing then combines the resulting graph-metric
inequalities.

\subsection{Graph-Metric Inequalities}

\begin{definition}[Graph-metric inequality]
\label{def:metric-inequality}
Let $m=m_{Q,\phi}$ be a $Q$-metric on $V(G)$. We say that $m$ satisfies the
\emph{graph-metric inequality} for $\mathcal H$ if every coding solution on
$G$ that achieves $\mathbf r$ satisfies
\begin{equation}\label{eq:metric-inequality}
  \sum_{e=uv\in E} m(u,v)f(e)
  \ge
  \sum_{i=1}^{k} r_i m(s_i,t_i).
\end{equation}
\end{definition}

The left-hand side of~\eqref{eq:metric-inequality} weights the total
transmission rate on each supply edge by its distance under $m$. The
right-hand side is the corresponding weighted sum over the session pairs.

Yin~\etal~\cite[Corollary~1]{yin2018reduction} proved that, if $Q$ is a
connected bipartite graph with unit edge lengths and at most one distinct
ordered endpoint pair among the sessions has distance greater than two, then
every coding solution on $Q$ satisfies
\begin{equation}\label{eq:yin-unit}
  \sum_{xy\in E(Q)} f(xy)
  \ge
  \sum_{i=1}^{k} r_i d_Q(s_i,t_i).
\end{equation}

Our extension permits sessions of distance greater than two in both
orientations of one unordered endpoint pair.

\begin{lemma}[Bidirectional long-pair extension]
\label{lem:bidirectional}
Let $Q$ be a connected bipartite graph with unit edge lengths. Suppose that
there are nodes $a,b\in V(Q)$ with $d_Q(a,b)>2$ such that every distinct
ordered endpoint pair of source--receiver distance greater than two is
$(a,b)$ or $(b,a)$. Then every coding solution on $Q$ satisfies
\begin{equation}\label{eq:bidirectional-unit}
  \sum_{xy\in E(Q)} f(xy)
  \ge
  \sum_{i=1}^{k} r_i d_Q(s_i,t_i).
\end{equation}
\end{lemma}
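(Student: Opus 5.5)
Our plan is to reduce to Yin~\etal's Corollary~1 (inequality~\eqref{eq:yin-unit}) by splitting the long sessions by orientation, and to glue the pieces with an entropy argument. Partition the sessions into three groups: $\mathcal H_{\le 2}$, the sessions with $d_Q(s_i,t_i)\le 2$; $\mathcal H_{ab}$, those with $(s_i,t_i)=(a,b)$; and $\mathcal H_{ba}$, those with $(s_i,t_i)=(b,a)$. The hypothesis says these groups exhaust $\mathcal H$.

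The naive approach would apply Corollary~1 to the same coding solution while regarding only $\mathcal H_{\le 2}\cup\mathcal H_{ab}$ as demanded sessions, and treating the messages of $\mathcal H_{ba}$ as extra randomness. This does not directly give the bound for all of $\mathcal H$.

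We therefore plan a topological modification instead. Add a new node $b'$ adjacent only to $b$ through one new edge. Then declare that sessions in $\mathcal H_{ba}$ have source $b'$ instead of $b$: the coding solution is extended by sending those messages from $b'$ to $b$ over the new edge. The distance $d(b',a)=d(b,a)+1$ is still greater than two, so this does not help by itself. The key observation we would exploit is that Yin~\etal's proof of Corollary~1 (via Input--Output/crypto-type inequalities on bipartite $Q$) treats a long pair by cutting along distance layers from the source. The layer cuts $\{x: d_Q(a,x)\le j\}$ for $j=0,\dots,d_Q(a,b)-1$ separate $a$ from $b$ in both directions simultaneously.

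The concrete step is then an entropy argument. For each layer cut $C_j$ between distance classes $j$ and $j+1$ from $a$, we show that the total coding flow across $C_j$, counted in both directions, is at least the sum of the rates of all sessions in $\mathcal H_{ab}\cup\mathcal H_{ba}$. We would also add the contributions of the short sessions that the Corollary~1 argument attributes to that layer. This is the standard cut bound in the undirected setting, which holds for coding because of the bidirectional edge-rate accounting $f(e)=f(\overrightarrow{uv})+f(\overrightarrow{vu})$. Summing over the $d_Q(a,b)$ layers yields $d_Q(a,b)\sum_{i\in\mathcal H_{ab}\cup\mathcal H_{ba}}r_i$. To combine this with the short sessions without double counting edges, we would follow Yin~\etal's decomposition. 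Each edge of bipartite $Q$ lies between consecutive layers of the BFS from $a$, so the edge set partitions into the layer cuts. Within each layer, the short-session contribution is charged by their Input--Output argument, now conditioned on the long-session messages of both orientations jointly rather than one.

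The main obstacle is this conditioning step. Corollary~1 needs the long messages to enter at a single location so that conditioning on them preserves the independence structure used for short sessions. With messages originating at both $a$ and $b$, we must verify that the crypto inequality still applies when conditioning on the joint message set $W_{ab}\cup W_{ba}$ within each layer.

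To handle this, we would first try to reverse $\mathcal H_{ba}$: replace every $(b,a)$ session by an $(a,b)$ session of equal rate. We would justify this reversal by a time-reversal/duality argument for undirected coding, for instance by showing that the entropy bound for each layer cut is symmetric in the roles of $a$ and $b$. That would reduce the statement to Corollary~1 with the single ordered pair $(a,b)$.
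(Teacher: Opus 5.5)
There is a genuine gap exactly at the step you identify as the main obstacle, and your proposed fix does not close it. Replacing every $(b,a)$ session by an $(a,b)$ session of equal rate means turning the given coding solution into a new one, with the same edge rates, in which only some messages change direction. No such re-orientation or time-reversal theorem is available for arbitrary, possibly nonlinear, zero-error codes; reversibility of nonlinear codes is known to fail in general for directed networks. So this step is an unproved assertion that carries the whole difficulty: if it held, the lemma would follow from Yin et al.'s Corollary~1 immediately.

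Your fallback, that the cut bound on each layer cut is symmetric in $a$ and $b$, is true, since the Crypto inequality depends only on the unordered pair, but it does not help. Summing cut bounds over the layer cuts only gives $\sum_e f(e)\ge\sum_i r_i\,\lvert d_Q(a,s_i)-d_Q(a,t_i)\rvert$. This undercounts short sessions: a distance-two pair inside a single layer crosses no layer cut. In the layer-counting argument, short sessions are charged through a global submodularity step over the sets $\widehat U_j=U_j^{\mathrm{in}}\cup S_{U_j}$, not cut by cut. You therefore cannot add long-session cut bounds layer by layer to short-session charges without counting the same edge entropy twice.

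The missing idea is that you need neither to modify the instance nor to condition on anything. You only need to check what the layer-counting proof uses about the long message. Take layers $U_j=\{v:d_Q(a,v)=j\}$ and $D=d_Q(a,b)$, and merge sessions with equal ordered endpoints into messages $X^+$ for $(a,b)$ and $X^-$ for $(b,a)$. As far as the long message is concerned, the argument only needs it to lie in every $\mathcal A_j=\operatorname{Dom}(\widehat U_j)$ for $0\le j\le D$. The $(D+1)$-way submodularity then produces $D+1$ right-hand terms, each of which determines it.

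This membership holds for $Z=(X^+,X^-)$:
\begin{itemize}
\item At $j=0$, $X^+\in S_{U_0}$, and $X^-$ is decoded at $a\in U_0$, so Input--Output puts it in $\mathcal A_0$.
\item The case $j=D$ is symmetric.
\item For $1\le j\le D-1$, $\widehat U_j$ contains the transmissions entering $U_j$ and, by Input--Output, determines those leaving it. It therefore determines both directions on the edges between $U_{j-1}$ and $U_j$. This is a cut separating $a$ from $b$, so the orientation-free Crypto inequality gives $X^+,X^-\in\mathcal A_j$.
\end{itemize}
By independence, $H(Z)=H(X^+)+H(X^-)$, so $H(Z)$ is counted $D+1$ times. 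Removing the initial copy of each source entropy then yields the inequality. This is how the paper proves the lemma. Your worry that the long messages must enter at a single location disappears, because membership of $Z$ in every $\mathcal A_j$ is the only property of the long message that the argument uses.
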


\begin{proof}
Sessions with the same ordered endpoints can be merged into one session: its
message consists of the original independent messages, so its rate is their
sum. If the remaining sessions of distance greater than two have only
one orientation,~\eqref{eq:yin-unit} applies. Otherwise, one source message
has ordered endpoints $(a,b)$ and the other has ordered endpoints $(b,a)$.
Appendix~\ref{app:bidirectional} extends the layer-counting argument to show
that both source messages are counted with the required multiplicity, which
gives~\eqref{eq:bidirectional-unit}.
\end{proof}

Our Transfer Lemma transfers the coding--routing comparison in
\eqref{eq:transfer-assumption} on \(Q\) to the graph-metric inequality
\eqref{eq:metric-inequality} on \(G\). The proof uses an edge-to-path
replacement result established by Langberg and
Effros~\cite[Definition~2 and Lemma~3]{langberg2021edgeremoval}. Let \(e=uv\)
be an edge of capacity \(\lambda\), and let \(P\) be a \(u\)--\(v\) path in
\(G-e\). Let \(G'\) be obtained from \(G\) by deleting \(e\) and increasing
the capacity of every edge of \(P\) by \(\lambda\). If a zero-error coding
solution on \(G\) achieves \(\mathbf r\), then, for every
\(\varepsilon>0\), there is a zero-error coding solution on \(G'\) for the
same sessions with rate vector \(\mathbf r'\) satisfying
\(\lvert r_i'-r_i\rvert<\varepsilon\) for every \(i\).

Fig.~\ref{fig:transfer-lemma} illustrates the node identification and
edge-to-path replacement used in the proof.
\begin{figure*}[!ht]
\centering
\begin{tikzpicture}[
  font=\footnotesize,
  >=Latex,
  box/.style={
    draw=black,
    rounded corners=2pt,
    align=center,
    minimum height=8mm,
    inner xsep=5pt,
    line width=.55pt
  },
  metricitem/.style={
    draw=black,
    rounded corners=2pt,
    minimum width=31mm,
    minimum height=15mm,
    align=center,
    fill=white,
    line width=.55pt
  },
  finalbox/.style={
    draw=black,
    rounded corners=2pt,
    align=center,
    minimum height=7mm,
    inner xsep=7pt,
    inner ysep=2pt,
    fill=black!10,
    line width=.70pt
  },
  arr/.style={
    -{Latex[length=1.7mm]},
    draw=black,
    line width=.65pt
  }
]

\node[
  box,
  fill=black!6,
  minimum width=47mm
] (metric) at (0,0)
{the shortest-path metric \(d_{G,\ell}\)};

\node[metricitem] (a1) at (-5.0,-1.55) {};
\node at (-5.0,-1.02) {\(K_2\)-metric \(m_1\)};
\draw (-5.35,-1.72)--(-4.65,-1.72);
\fill (-5.35,-1.72) circle (1.35pt);
\fill (-4.65,-1.72) circle (1.35pt);
\node at (-5.0,-2.1) {weight \(\lambda_1\)};

\node[metricitem] (a2) at (0,-1.55) {};
\node at (0,-1.02) {\(K_{2,3}\)-metric \(m_2\)};

\coordinate (L1) at (-.42,-1.52);
\coordinate (L2) at (.42,-1.52);
\coordinate (R1) at (-.63,-1.84);
\coordinate (R2) at (0,-1.84);
\coordinate (R3) at (.63,-1.84);

\foreach \x in {L1,L2}{
  \foreach \y in {R1,R2,R3}{
    \draw (\x)--(\y);
  }
}
\foreach \x in {L1,L2,R1,R2,R3}{
  \fill (\x) circle (1.25pt);
}
\node at (0,-2.1) {weight \(\lambda_2\)};

\node[metricitem] (a3) at (5.0,-1.55) {};
\node at (5.0,-1.02) {\(Q_a\)-metric \(m_a\)};

\draw
  (4.35,-1.78)--(4.75,-1.50)--(5.18,-1.82)--(5.58,-1.48);
\draw (4.75,-1.50)--(5.58,-1.48);

\foreach \p in {
  (4.35,-1.78),
  (4.75,-1.50),
  (5.18,-1.82),
  (5.58,-1.48)
}{
  \fill \p circle (1.25pt);
}
\node at (5.0,-2.1) {weight \(\lambda_a\)};

\draw[arr] (metric.south)--(a1.north);
\draw[arr] (metric.south)--(a2.north);
\draw[arr] (metric.south)--(a3.north);

\node[
  box,
  fill=black!5,
  minimum width=30mm
] (i1) at (-5.0,-3.00)
{inequality for \(m_1\)};

\node[
  box,
  fill=black!5,
  minimum width=30mm
] (i2) at (0,-3.00)
{inequality for \(m_2\)};

\node[
  box,
  fill=black!5,
  minimum width=30mm
] (i3) at (5.0,-3.00)
{inequality for \(m_a\)};

\draw[arr] (a1.south)--(i1.north);
\draw[arr] (a2.south)--(i2.north);
\draw[arr] (a3.south)--(i3.north);

\node[finalbox] (global) at (0,-4.05)
{combine using the packing weights:\quad
 \(\displaystyle
   \sum_e \ell(e)f(e)
   \ge
   \sum_i r_i d_{G,\ell}(s_i,t_i)\)};

\draw[arr] (i1.south)--(global.north west);
\draw[arr] (i2.south)--(global.north);
\draw[arr] (i3.south)--(global.north east);

\end{tikzpicture}
\caption{The unified metric framework. A metric packing supplies graph
metrics and weights. The constraints in the metric packing, together with the
graph-metric inequalities, show that every coding solution has transmission
cost at least the minimum fractional-routing cost for the same rate vector.}
\label{fig:framework}
\end{figure*}
\begin{lemma}[Transfer Lemma]\label{lem:transfer}
Let \(Q\) be a connected graph with unit edge lengths, let
\(\phi:V(G)\to V(Q)\) be a map, and let \(m=m_{Q,\phi}\). Suppose that, for
every rate vector \(\mathbf r\), every coding solution on \(Q\) for the
sessions \((\phi(s_i),\phi(t_i),r_i)\) satisfies
\begin{equation}\label{eq:transfer-assumption}
  \sum_{xy\in E(Q)} f_Q(xy)
  \ge
  \sum_{i=1}^{k} r_i
  d_Q\bigl(\phi(s_i),\phi(t_i)\bigr),
\end{equation}
where \(f_Q(xy)\) is the total transmission rate on edge \(xy\). Then \(m\)
satisfies the graph-metric inequality~\eqref{eq:metric-inequality} on \(G\).
\end{lemma}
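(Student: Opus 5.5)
Fix a coding solution on \(G\) with edge rates \(f\) that achieves \(\mathbf r\). We push it through the construction of Fig.~\ref{fig:transfer-lemma} to obtain a coding solution on \(Q\) for the sessions \((\phi(s_i),\phi(t_i),r_i)\), apply \eqref{eq:transfer-assumption}, and read the result back as \eqref{eq:metric-inequality}. It is convenient to phrase the construction in terms of capacities: give each supply edge \(e\) capacity \(c(e):=f(e)\). The given solution is then feasible on \((G,c)\).

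\textbf{Step 1 (identification).} Contract each fiber \(\phi^{-1}(x)\) to a single node \(x\), keeping parallel edges and dropping loops. This produces a multigraph \(G_\phi\) on the node set \(\phi(V(G))\subseteq V(Q)\), and \(Q\) is connected by assumption. Any coding solution on \(G\) yields one on \(G_\phi\) with the same rates. The merged node simply simulates every node in its fiber, and transmissions on edges inside a fiber become internal computation. If \(\phi(s_i)=\phi(t_i)\), session \(i\) contributes \(0\) to both sides and can be dropped. The edge loads on \(G_\phi\) are bounded by the \(f(e)\) of the surviving edges, and loops (where \(m(u,v)=0\)) carry no cost.

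\textbf{Step 2 (edge-to-path replacement).} For each remaining edge \(e=uv\), let \(x=\phi(u)\) and \(y=\phi(v)\), and fix a shortest \(x\)--\(y\) path \(P_e\) in \(Q\) of length \(d_Q(x,y)=m(u,v)\). First add the edges of \(Q\) with capacity \(0\). Then apply the Langberg--Effros replacement lemma once per edge of \(G_\phi\): delete \(e\) and add \(c(e)\) to every edge of \(P_e\). Replacing \(e\) at the moment it is deleted is legitimate, because \(P_e\) uses only \(Q\)-edges, which are distinct from \(e\). After finitely many steps, for every \(\varepsilon>0\), we have a zero-error coding solution on \(Q\) achieving some \(\mathbf r'\) with \(|r_i'-r_i|<\varepsilon\). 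Its total rate on each \(xy\in E(Q)\) is at most the accumulated capacity
\[
\sum_{e:\,xy\in P_e} f(e).
\]

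\textbf{Step 3 (apply and pass to the limit).} Apply \eqref{eq:transfer-assumption} to this solution. It gives
\[
\sum_{i} r_i' d_Q(\phi(s_i),\phi(t_i))\le \sum_{xy\in E(Q)} f_Q(xy)\le \sum_{e}f(e)\,|P_e| = \sum_{e=uv} m(u,v) f(e).
\]
Letting \(\varepsilon\to 0\) gives \eqref{eq:metric-inequality}.

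The main obstacle is making the use of the replacement lemma rigorous. It is stated for a single edge, with capacity-feasible solutions and a rate perturbation, so it must be iterated with shrinking \(\varepsilon\) and the errors accumulated. One also has to check two things: that the capacity-based bound \(f_Q(xy)\le\) (capacity) is the right quantity to feed into \eqref{eq:transfer-assumption}, and that zero-capacity initial \(Q\)-edges and the contraction step are compatible with the zero-error model. I would handle the fiber contraction by a direct simulation argument, and use the replacement lemma only for the path step.
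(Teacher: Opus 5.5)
Your proposal is correct and follows essentially the same route as the paper. The paper likewise lets each node of \(Q\) simulate its fiber \(\phi^{-1}(x)\), adds a distinct auxiliary edge \(a_e\) of capacity \(f(e)\) beside a zero-capacity copy of \(Q\) for each supply edge with \(\phi(u)\ne\phi(v)\), replaces these edges one at a time by shortest paths \(P_e\) via the Langberg--Effros lemma, and applies \eqref{eq:transfer-assumption} before letting the rate error vanish, with the error accumulation you flag handled exactly as you suggest (per-step error below \(1/(nM)\) over \(M\) replacements, so the final rates are within \(1/n\) of \(\mathbf r\)).
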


\begin{proof}
Fix a coding solution on \(G\) that achieves \(\mathbf r\), with total
transmission rate \(f(e)\) on each supply edge \(e\). For every \(e=uv\)
with \(\phi(u)\ne\phi(v)\), choose a shortest
\(\phi(u)\)--\(\phi(v)\) path \(P_e\) in \(Q\).

Start with a copy of \(Q\) whose edges have capacity zero, and let each node
\(x\in V(Q)\) simulate all nodes in \(\phi^{-1}(x)\). For every such supply
edge \(e=uv\), add a distinct auxiliary edge \(a_e\) of capacity \(f(e)\)
between \(\phi(u)\) and \(\phi(v)\), carrying the same transmissions as \(e\)
in both directions. Transmissions on an edge \(e=uv\) with
\(\phi(u)=\phi(v)\) are local at \(\phi(u)\). Hence the resulting multigraph
has a coding solution that achieves \(\mathbf r\) for the sessions
\((\phi(s_i),\phi(t_i),r_i)\).

Let \(M\) be the number of auxiliary edges, and fix an order for replacing
them. Because \(P_e\) uses only edges of the copy of \(Q\), \(P_e\) is a
\(\phi(u)\)--\(\phi(v)\) path in the current multigraph minus \(a_e\) when
\(a_e\) is replaced. The edge-to-path replacement result therefore applies
successively. After all replacements, only the copy of \(Q\) remains, with
capacity \(c_Q(xy):=\sum_{e:\,xy\in E(P_e)}f(e)\) on each
\(xy\in E(Q)\).

If \(M=0\), the construction already gives a zero-error coding solution on
\(Q\) under \(c_Q\), and we set \(\mathbf q^{(n)}=\mathbf r\). If \(M>0\),
then, for each integer \(n\ge1\), apply the edge-to-path replacement result
at each replacement with coordinatewise approximation error less than
\(1/(nM)\). The resulting zero-error coding solution on \(Q\), feasible under
\(c_Q\), has a rate vector \(\mathbf q^{(n)}\) satisfying
\(\lvert q_i^{(n)}-r_i\rvert<1/n\) for \(1\le i\le k\) by the triangle
inequality. Hence \(\mathbf q^{(n)}\to\mathbf r\) in either case.

For every \(n\), applying~\eqref{eq:transfer-assumption} to the coding
solution with rate vector \(\mathbf q^{(n)}\) and then using the edge-capacity
constraints gives
\begin{equation}\label{eq:transfer-approximation-bound}
  \sum_{i=1}^{k} q_i^{(n)}
  d_Q\bigl(\phi(s_i),\phi(t_i)\bigr)
  \le
  \sum_{xy\in E(Q)} c_Q(xy).
\end{equation}
Letting \(n\to\infty\), using the definition of \(c_Q\), and recalling that
each \(P_e\) is shortest, we obtain
\begin{align}
  \sum_{i=1}^{k} r_i m(s_i,t_i)
  &\le
  \sum_{xy\in E(Q)} c_Q(xy)
  =
  \sum_{e=uv\in E} f(e)m(u,v).
  \label{eq:transfer-metric-bound}
\end{align}
Thus \(m\) satisfies~\eqref{eq:metric-inequality}.
\end{proof}

\subsection{Divide and Combine}

The final step uses the constraints in a fractional metric packing to combine
the graph-metric inequalities, as illustrated in
Fig.~\ref{fig:framework}.

\begin{theorem}[Divide-and-combine criterion]
\label{thm:divide-combine}
Fix a multiple-unicast instance \((G,\mathcal H)\). Suppose that, for every
nonnegative edge-cost assignment \(\ell\), there is a fractional metric
packing for \((G,\ell,\mathcal P(\mathcal H))\) such that every graph metric
in the metric packing satisfies~\eqref{eq:metric-inequality}. Then the
multiple-unicast conjecture holds for \((G,\mathcal H)\).
\end{theorem}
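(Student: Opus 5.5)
The plan is to verify the cost-domain form, Conjecture~\ref{conj:cost}, for the fixed instance \((G,\mathcal H)\), and then invoke Theorem~\ref{thm:cost-equivalence}. That equivalence is stated for the full conjecture, but its proof works instance by instance. Assuming the cost inequality for \((G,\mathcal H)\), give each edge \(e\) capacity \(f(e)\). Any coding solution that achieves \(\mathbf r\) is then feasible. If \(\mathbf r\) were not routable, LP duality would supply a nonnegative \(\ell\) with \(\sum_i r_i d_{G,\ell}(s_i,t_i)>\sum_e \ell(e)c(e)\ge\sum_e\ell(e)f(e)\), which is a contradiction. The first step is therefore to state that the converse direction of the proof of Theorem~\ref{thm:cost-equivalence} only requires the cost inequality on the same \((G,\mathcal H)\), with \(\ell\) ranging over all nonnegative assignments. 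Together with \(\mathcal C_{\MCF}\subseteq\mathcal C_{\NC}\), this gives~\eqref{eq:throughput-conjecture}.

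To prove the cost inequality, fix \(\ell\ge0\) and a coding solution achieving \(\mathbf r\), and take the hypothesised packing \(\{(\lambda_a,m_a)\}\) for \((G,\ell,\mathcal P(\mathcal H))\). Since \(f(e)\ge0\), the edge constraint~\eqref{eq:packing-edge} gives
\begin{equation*}
\sum_{e=uv\in E}\ell(e)f(e)\ \ge\ \sum_{e=uv\in E}\sum_a\lambda_a m_a(u,v)f(e)\ =\ \sum_a\lambda_a\sum_{e=uv\in E}m_a(u,v)f(e).
\end{equation*}
Applying the graph-metric inequality~\eqref{eq:metric-inequality} to each \(m_a\), and using \(\lambda_a\ge0\), bounds this below by \(\sum_a\lambda_a\sum_i r_i m_a(s_i,t_i)\). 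After swapping sums this equals \(\sum_i r_i\sum_a\lambda_a m_a(s_i,t_i)\). Every \(\{s_i,t_i\}\) belongs to \(\mathcal P(\mathcal H)\), so~\eqref{eq:packing-pair} turns the inner sum into \(d_{G,\ell}(s_i,t_i)\), which is exactly~\eqref{eq:cost-conjecture}.

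The chain of inequalities is routine. The only delicate point, which I expect to be the main obstacle, is justifying that Theorem~\ref{thm:cost-equivalence} localizes to a single instance with all capacities. This requires checking that the dual of the routing LP for \((G,c,\mathcal H)\) produces the violating \(\ell\) on the same graph and sessions. It also requires noting that the packing sum may be infinite-free, since it has finitely many terms, so the interchange of sums is harmless. Sessions sharing an unordered pair cause no issue, because~\eqref{eq:packing-pair} is imposed per pair and each session uses it separately.
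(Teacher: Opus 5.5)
Your proposal is correct and follows the paper's proof. It uses the same three-step chain \eqref{eq:packing-edge}, \eqref{eq:metric-inequality}, \eqref{eq:packing-pair} to obtain the cost inequality for every $\ell$, and then applies Theorem~\ref{thm:cost-equivalence}. Your explicit check that the converse direction of that equivalence works for a single fixed instance $(G,\mathcal H)$ is valid; the paper relies on this only implicitly.
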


\begin{proof}
Fix \(\ell\) and a coding solution that achieves \(\mathbf r\). Let
\((m_a,\lambda_a)\) be the graph metrics and weights in the metric packing.
By~\eqref{eq:packing-edge}, \eqref{eq:metric-inequality}, and
\eqref{eq:packing-pair},
\begin{align}
  \sum_{e\in E}\ell(e)f(e)
  &\ge \sum_a\lambda_a \sum_{e=uv\in E}m_a(u,v)f(e),
  \label{eq:divide-combine-edge}\\
  &\ge \sum_a\lambda_a \sum_{i=1}^{k}r_i m_a(s_i,t_i),
  \label{eq:divide-combine-metric}\\
  &= \sum_{i=1}^{k}r_i d_{G,\ell}(s_i,t_i).
  \label{eq:divide-combine-distance}
\end{align}
The displayed inequality holds for every \(\ell\) and every coding solution.
Theorem~\ref{thm:cost-equivalence} therefore proves that the multiple-unicast
conjecture holds for \((G,\mathcal H)\).
\end{proof}

\begin{corollary}[\(K_2\)--\(K_{2,3}\) packing]
\label{cor:k2-k23-packing}
Suppose that, for every nonnegative edge-cost assignment \(\ell\), the pairs
in \(\mathcal P(\mathcal H)\) admit a fractional metric packing by
\(K_2\)- and \(K_{2,3}\)-metrics. Then the multiple-unicast conjecture holds
for \((G,\mathcal H)\).
\end{corollary}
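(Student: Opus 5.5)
By Theorem~\ref{thm:divide-combine}, it suffices to show that every \(K_2\)-metric and every \(K_{2,3}\)-metric \(m=m_{Q,\phi}\) satisfies the graph-metric inequality~\eqref{eq:metric-inequality} for \(\mathcal H\). Once this holds, any packing by such metrics, which the hypothesis supplies for each \(\ell\), meets the criterion of Theorem~\ref{thm:divide-combine}. By the Transfer Lemma (Lemma~\ref{lem:transfer}), it is enough to verify the unit-length inequality~\eqref{eq:transfer-assumption} on \(Q\) itself. This must hold for every rate vector and for the image sessions \((\phi(s_i),\phi(t_i),r_i)\), which are arbitrary sessions on \(V(Q)\). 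We may discard image sessions with \(\phi(s_i)=\phi(t_i)\), since they contribute zero to the right-hand side.

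For \(Q=K_2\), the graph is connected and bipartite with diameter one. Every ordered endpoint pair therefore has distance at most two, and Yin~\etal's inequality~\eqref{eq:yin-unit} applies directly. A more elementary route also works: the single edge separates the two vertices, and any coding solution must carry at least \(\sum r_i\) over that edge by the cut bound. Either way,~\eqref{eq:transfer-assumption} holds.

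For \(Q=K_{2,3}\), the graph is connected and bipartite with sides \(\{a,b\}\) and \(\{p_1,p_2,p_3\}\). Its distances are \(1\) across the bipartition and \(2\) within a side, so the diameter is two. Hence no ordered endpoint pair has distance greater than two, the hypothesis of~\eqref{eq:yin-unit} is vacuously satisfied, and every coding solution on \(K_{2,3}\) satisfies~\eqref{eq:transfer-assumption}. If one prefers not to rely on the exact form of the cited hypothesis, Lemma~\ref{lem:bidirectional} also covers this case, because its condition on long pairs is vacuous here.

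The only point needing care is that~\eqref{eq:transfer-assumption} must hold for \emph{all} rate vectors and all image sessions, including coinciding endpoint pairs and sessions whose sources and receivers collapse under \(\phi\). The merging of parallel sessions noted in the proof of Lemma~\ref{lem:bidirectional} and the removal of zero-distance sessions handle these cases. Combining the two metric inequalities through Theorem~\ref{thm:divide-combine} then proves the corollary.
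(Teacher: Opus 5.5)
Your proof is correct and matches the paper's argument: $K_2$ and $K_{2,3}$ are connected bipartite graphs of diameter at most two, so \eqref{eq:yin-unit} holds on each of them. The Transfer Lemma lifts this to the graph-metric inequality, and Theorem~\ref{thm:divide-combine} combines these inequalities. One small correction to your optional aside: Lemma~\ref{lem:bidirectional} does not literally apply to $K_{2,3}$, because it assumes nodes $a,b$ with $d_Q(a,b)>2$ exist and $K_{2,3}$ has none, but your main route does not need it.
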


\begin{proof}
Both \(K_2\) and \(K_{2,3}\) are connected bipartite graphs of diameter at
most two. Hence~\eqref{eq:yin-unit} applies to every collection of sessions
on either graph. The Transfer Lemma gives the required
graph-metric inequalities, and Theorem~\ref{thm:divide-combine} completes the
proof.
\end{proof}

\section{New Network Classes and a New Proof}
\label{sec:applications}

\subsection{Five Terminal Locations}

\begin{theorem}\label{thm:five}
The multiple-unicast conjecture holds for every network whose sessions use at
most five terminal locations.
\end{theorem}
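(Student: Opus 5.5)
The plan is to apply Corollary~\ref{cor:k2-k23-packing} directly, with the packing supplied by Theorem~\ref{thm:karzanov-packings}(1).

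Fix a network \((G,\mathcal H)\) whose sessions use at most five terminal locations, and set \(R:=\mathcal P(\mathcal H)\). Then \(T_R\) is exactly the set \(T\) of terminal locations, so \(\lvert T_R\rvert\le5\). Now fix an arbitrary nonnegative edge-cost assignment \(\ell\). Item~1 of Theorem~\ref{thm:karzanov-packings} gives a fractional metric packing for \((G,\ell,R)\) by \(K_2\)- and \(K_{2,3}\)-metrics. Because \(\ell\) was arbitrary, the hypothesis of Corollary~\ref{cor:k2-k23-packing} holds, and the corollary yields the conjecture for \((G,\mathcal H)\).

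Three bookkeeping points need to be checked along the way.

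\begin{enumerate}
\item Sessions that share an unordered endpoint pair, possibly with opposite orientations, are covered. The packing condition~\eqref{eq:packing-pair} concerns unordered pairs, so each such session inherits the equality \(\sum_a\lambda_a m_a(s_i,t_i)=d_{G,\ell}(s_i,t_i)\).

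\item Degenerate cost assignments are covered. If \(\ell\) has zero-length edges, or if \(d_{G,\ell}(s,t)=0\) for some pair, the cited packing result still applies, since it is stated for all nonnegative \(\ell\).

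\item The corollary's hypotheses hold in the form needed. When the Transfer Lemma maps sessions into \(K_2\) or \(K_{2,3}\), the images \(\phi(s_i),\phi(t_i)\) may coincide. Such sessions contribute zero to the right-hand side of~\eqref{eq:transfer-assumption}, and they cause no problem because \eqref{eq:yin-unit} holds for any collection of sessions on a graph of diameter at most two.
\end{enumerate}

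The real content is Karzanov's packing theorem, which we take as given. The only step that needs care is confirming that the cited packing is an exact packing in the sense of Definition~\ref{def:packing}: the edge inequalities~\eqref{eq:packing-edge} must hold for every supply edge, not merely on some metric restricted to the terminals. If Karzanov's statement is phrased for the distance function on \(T\) only, I would first extend each \(K_2\)- or \(K_{2,3}\)-metric on the terminals to a map \(\phi:V(G)\to V(Q)\). This is the main obstacle. The extension would be done via the standard correspondence between metric decompositions of \(d_{G,\ell}\) restricted to \(T\) and the graph-metric form on all of \(V(G)\), which is exactly what Karzanov's ``packing'' formulation provides. I would then verify~\eqref{eq:packing-edge}.
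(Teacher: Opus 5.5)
Your proposal is correct and is exactly the paper's argument: item~1 of Theorem~\ref{thm:karzanov-packings} gives a \(K_2\)--\(K_{2,3}\) packing for \((G,\ell,\mathcal P(\mathcal H))\) for every \(\ell\), and Corollary~\ref{cor:k2-k23-packing} finishes. The obstacle you flag does not arise, because Theorem~\ref{thm:karzanov-packings} is already stated as a packing in the sense of Definition~\ref{def:packing}, with~\eqref{eq:packing-edge} imposed on every supply edge; that packing over all of \(V(G)\) is the substantive content of Karzanov's result rather than a routine extension, since on the terminals alone every semimetric on at most five points is already a nonnegative combination of cut and \(K_{2,3}\)-metrics.
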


\begin{proof}
Theorem~\ref{thm:karzanov-packings} provides the required
\(K_2\)--\(K_{2,3}\) fractional metric packing. Corollary~
\ref{cor:k2-k23-packing} proves the claim.
\end{proof}

The supply graph may have an arbitrary size and topology. The only restriction is
that the instance has at most five terminal locations; in particular,
its demand graph has at most \(\binom{5}{2}\) edges.

\subsection{Three-Hole Planar Networks}

\begin{theorem}\label{thm:three-holes}
Let \(G\) be a planar supply graph with a fixed planar embedding, and let
\(\holes\) be a set of at most three holes. If, for every session, both
terminal locations lie on the boundary of the same hole in \(\holes\), then
the multiple-unicast conjecture holds for the instance.
\end{theorem}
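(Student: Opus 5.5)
The plan mirrors the proof of Theorem~\ref{thm:five}: obtain a $K_2$--$K_{2,3}$ packing from Theorem~\ref{thm:karzanov-packings}(2) and then apply Corollary~\ref{cor:k2-k23-packing}. Fix the instance $(G,\mathcal H)$ with its planar embedding and hole set $\holes$, $\lvert\holes\rvert\le 3$. Set $R:=\mathcal P(\mathcal H)$. By hypothesis, for every $\{s,t\}\in R$ both endpoints lie on the boundary of a common hole.

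First, I reduce to the case where $G$ is a three-hole planar graph in the sense of Theorem~\ref{thm:karzanov-packings}(2). If $\lvert\holes\rvert<3$, I designate additional arbitrary faces of the embedding as holes until there are three. If the embedding has fewer than three faces, I instead subdivide an edge or add a pendant cycle through a new node joined by cost-zero edges. The key point is that enlarging $\holes$ only weakens the hypothesis, because every pair in $R$ still has its endpoints on a common hole. Alternatively, one can observe that the one- and two-hole cases are already covered by cut sufficiency \cite{okamura1981multicommodity,okamura1983flows}; a cut packing is a special case of a $K_2$--$K_{2,3}$ packing with no $K_{2,3}$ terms, so Corollary~\ref{cor:k2-k23-packing} applies directly there. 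I would present the second route for fewer than three holes to avoid any graph modification. I do this because modifying $G$ would require checking that both the coding solution and the distances $d_{G,\ell}$ are unaffected.

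Second, for the three-hole case, fix an arbitrary nonnegative $\ell$. Theorem~\ref{thm:karzanov-packings}(2) supplies a fractional metric packing for $(G,\ell,R)$ using only $K_2$- and $K_{2,3}$-metrics. Since $\ell$ is arbitrary, the hypothesis of Corollary~\ref{cor:k2-k23-packing} holds, and the conjecture follows for $(G,\mathcal H)$. Sessions sharing unordered endpoints, or having opposite orientations, cause no difficulty: the packing depends only on $R$, and \eqref{eq:yin-unit} holds on $K_2$ and $K_{2,3}$ for arbitrary session collections, since both graphs have diameter at most two.

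The main obstacle I expect is matching hypotheses precisely with Karzanov's packing theorem. This means confirming that his notion of a three-hole planar graph (with terminals on hole boundaries, edge costs nonnegative and possibly zero) coincides with the setting here, and handling the degenerate cases of fewer than three holes. If zero costs cause an issue, I would perturb $\ell$ to $\ell+\varepsilon$. The packing inequalities \eqref{eq:divide-combine-edge}--\eqref{eq:divide-combine-distance} then give $\sum_e(\ell(e)+\varepsilon)f(e)\ge\sum_i r_i d_{G,\ell+\varepsilon}(s_i,t_i)\ge\sum_i r_i d_{G,\ell}(s_i,t_i)$, and letting $\varepsilon\to0$ recovers the cost-domain inequality. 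Theorem~\ref{thm:cost-equivalence} then concludes.
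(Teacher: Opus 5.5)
Your proposal is correct and follows the paper's proof exactly: the paper simply invokes the three-hole $K_2$--$K_{2,3}$ packing of Theorem~\ref{thm:karzanov-packings}(2) and then applies Corollary~\ref{cor:k2-k23-packing}. Your extra handling of fewer than three holes (via Okamura--Seymour and Okamura cut sufficiency, or by designating additional faces as holes) and of zero edge costs (via the $\ell+\varepsilon$ perturbation) is sound additional care that the paper leaves implicit; it is not a different route.
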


\begin{proof}
Theorem~\ref{thm:karzanov-packings} provides the required
\(K_2\)--\(K_{2,3}\) fractional metric packing. Corollary~
\ref{cor:k2-k23-packing} completes the proof.
\end{proof}

Fig.~\ref{fig:three-holes} illustrates such a network. The theorem imposes no
fixed bound on the number of terminal locations or sessions. Different sessions
may use different holes, but both terminal locations of each session must lie
on the boundary of the same hole.

\begin{figure}[!ht]
\centering
\resizebox{.7\columnwidth}{!}{%
\begin{tikzpicture}[>=Latex,line cap=round,line join=round,
  supply/.style={draw=black!55,line width=.55pt},
  boundary/.style={draw=holecolor,line width=2.2pt},
  session/.style={
    draw=holecolor!75!black,
    densely dashed,
    -{Latex[length=1.8mm]},
    line width=.95pt
  },
  vnode/.style={circle,fill=black,inner sep=0pt,minimum size=2.7pt}
]

\coordinate (a1) at (-3.05,1.50);
\coordinate (a2) at (-2.30,2.05);
\coordinate (a3) at (-1.35,1.72);
\coordinate (a4) at (-1.18,.92);
\coordinate (a5) at (-2.00,.45);
\coordinate (a6) at (-2.82,.72);
\draw[boundary] (a1)--(a2)--(a3)--(a4)--(a5)--(a6)--cycle;

\coordinate (b1) at (.72,1.25);
\coordinate (b2) at (1.25,2.02);
\coordinate (b3) at (2.25,1.88);
\coordinate (b4) at (2.82,1.10);
\coordinate (b5) at (1.55,.22);
\draw[boundary] (b1)--(b2)--(b3)--(b4)--(b5)--cycle;

\coordinate (c1) at (-1.55,-.62);
\coordinate (c2) at (-.55,-.35);
\coordinate (c3) at (.28,-.83);
\coordinate (c4) at (.36,-1.42);
\coordinate (c5) at (.12,-1.78);
\coordinate (c6) at (-.85,-2.12);
\coordinate (c7) at (-1.78,-1.55);
\draw[boundary] (c1)--(c2)--(c3)--(c4)--(c5)--(c6)--(c7)--cycle;

\coordinate (p1) at (-3.55,2.22);
\coordinate (p2) at (-.25,2.45);
\coordinate (p3) at (3.30,2.20);
\coordinate (p4) at (-3.65,-.10);
\coordinate (p5) at (3.45,-.10);
\coordinate (p6) at (-2.55,-2.38);
\coordinate (p7) at (.70,-2.48);
\coordinate (p8) at (3.05,-1.88);

\node[vnode] at (p1) {};
\node[vnode] at (p2) {};
\node[vnode] at (p3) {};
\node[vnode] at (p4) {};
\node[vnode] at (p5) {};
\node[vnode] at (p6) {};
\node[vnode] at (p7) {};
\node[vnode] at (p8) {};

\node[vnode] at (a1) {};
\node[vnode] at (a2) {};
\node[vnode] at (a3) {};
\node[vnode] at (a4) {};
\node[vnode] at (a5) {};
\node[vnode] at (a6) {};

\node[vnode] at (b1) {};
\node[vnode] at (b2) {};
\node[vnode] at (b3) {};
\node[vnode] at (b4) {};
\node[vnode] at (b5) {};

\node[vnode] at (c1) {};
\node[vnode] at (c2) {};
\node[vnode] at (c3) {};
\node[vnode] at (c4) {};
\node[vnode] at (c5) {};
\node[vnode] at (c6) {};
\node[vnode] at (c7) {};

\draw[supply] (p1)--(p2)--(p3);
\draw[supply] (p1)--(p4)--(p6)--(p7)--(p8)--(p5)--(p3);

\draw[supply] (p2)--(a3);
\draw[supply] (p1)--(a2);
\draw[supply] (p4)--(a6);
\draw[supply] (a4)--(c1);

\draw[supply] (p2)--(b2);
\draw[supply] (p3)--(b3);
\draw[supply] (p5)--(b4);

\draw[supply] (p4)--(c7);
\draw[supply] (p6)--(c6);
\draw[supply] (p7)--(c5);
\draw[supply] (p8)--(c5);

\draw[supply] (a3)--(b1);
\draw[supply] (a5)--(c2);
\draw[supply] (b5)--(c3);
\draw[supply] (p2)--(c2);

\draw[session] (a6) .. controls (-2.62,1.28) and (-1.72,1.52) .. (a3);
\draw[session] (a1) .. controls (-2.18,1.15) and (-1.66,.92) .. (a4);
\draw[session] (b1) .. controls (1.28,1.02) and (2.20,.96) .. (b4);
\draw[session] (c7) .. controls (-1.12,-1.05) and (-.38,-1.10) .. (c3);
\draw[session] (c1) .. controls (-.94,-1.48) and (-.42,-1.65) .. (c5);

\node[text=black,font=\scriptsize\bfseries] at (-2.18,1.78) {$F_1$};
\node[text=black,font=\scriptsize\bfseries] at (1.78,1.60) {$F_2$};
\node[text=black,font=\scriptsize\bfseries] at (-.76,-.68) {$F_3$};

\end{tikzpicture}%
}
\caption{A three-hole planar supply graph. The highlighted face boundaries
are the holes; \(F_1\), \(F_2\), and \(F_3\) have six, five, and seven
boundary nodes, respectively. Each displayed session has both terminal
locations on the boundary of one hole; different sessions may use different
holes.}
\label{fig:three-holes}
\end{figure}

\subsection{The Union of \(K_3\) and a Star}

Fig.~\ref{fig:matching-two-patterns}(c) illustrates the union of a triangle and a star pattern. The demand graph is a subgraph of the 
union of a triangle and a star. The two graphs in this union need not be vertex-disjoint. The supply graph may have arbitrary size and
topology, and there is no bound on the number of terminal locations; only the distinct unordered endpoint pairs of the sessions are restricted.

\begin{theorem}\label{thm:triangle-star}
The multiple-unicast conjecture holds for every undirected network whose
demand graph is a subgraph of the union of a triangle and a star.
\end{theorem}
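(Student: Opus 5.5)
The plan is to follow the template already used for Theorems~\ref{thm:five} and~\ref{thm:three-holes}: reduce the statement to the existence of a $K_2$--$K_{2,3}$ fractional metric packing and then invoke Corollary~\ref{cor:k2-k23-packing}.

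Fix an instance $(G,\mathcal H)$ whose demand graph $D(\mathcal H)$ is a subgraph of the union of a triangle and a star. Take any nonnegative edge-cost assignment $\ell$ and set $R:=\mathcal P(\mathcal H)$. Then $D_R=(T_R,R)$ is exactly $D(\mathcal H)$, because $T_R$ is the set of terminal locations and $R$ the set of distinct unordered terminal pairs. So the hypothesis of item~4 of Theorem~\ref{thm:karzanov-packings} holds. That item supplies a fractional metric packing for $(G,\ell,\mathcal P(\mathcal H))$ using only $K_2$- and $K_{2,3}$-metrics.

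Since this holds for every $\ell$, Corollary~\ref{cor:k2-k23-packing} applies and yields the conjecture for $(G,\mathcal H)$. Unwinding the corollary shows why it applies. Both $K_2$ and $K_{2,3}$ are connected bipartite graphs of diameter at most two, so no session on them has distance greater than two, and \eqref{eq:yin-unit} holds for every session collection on them. The Transfer Lemma then gives the graph-metric inequality \eqref{eq:metric-inequality} for each metric in the packing. Finally, Theorem~\ref{thm:divide-combine} combines these inequalities with the packing weights and appeals to Theorem~\ref{thm:cost-equivalence}.

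The only step needing care is matching the hypothesis of item~4 to our setting, and I expect it to be routine. The union of the triangle and the star need not be vertex-disjoint, and $D(\mathcal H)$ may be any subgraph of it, possibly degenerate: a star alone, a triangle alone, or a graph with fewer terminals. I would check that Karzanov's statement is formulated for subgraphs of such unions, which is how item~4 is phrased. Any degenerate case is also covered directly: a subgraph of a star is a subgraph of two stars, so item~3 applies with cut metrics only, and a triangle uses at most three terminals, so item~1 applies. Repeated sessions and orientations pose no problem, because the packing depends only on the unordered pairs in $\mathcal P(\mathcal H)$.
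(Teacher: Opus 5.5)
Your proposal is correct and matches the paper's proof: item~4 of Theorem~\ref{thm:karzanov-packings} supplies a $K_2$--$K_{2,3}$ fractional metric packing for $(G,\ell,\mathcal P(\mathcal H))$ for every $\ell$, and Corollary~\ref{cor:k2-k23-packing} then gives the conclusion. Your separate handling of degenerate demand graphs through items~1 and~3 is valid but unnecessary, since item~4 is already stated for every subgraph of the union.
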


\begin{proof}
Theorem~\ref{thm:karzanov-packings} provides the required
\(K_2\)--\(K_{2,3}\) fractional metric packing. Corollary~
\ref{cor:k2-k23-packing} proves the claim.
\end{proof}

This is our third new network class. Unlike Theorems~\ref{thm:five}
and~\ref{thm:three-holes}, it places no bound on the number of terminal
locations and no planarity condition on the supply graph.

\subsection{A New Proof: At Most Six Coding Nodes}
\label{sec:six}

Let \(W\) be a fixed set of six labels. For
\(m\in\mathbb{R}_{\ge0}^{\binom{W}{2}}\), extend \(m\) by
\(m(x,x)=0\) and \(m(x,y)=m(y,x)\). Define the six-point metric cone by
\begin{equation}\label{eq:met6}
  \MET_6
  :=
  \left\{
    m\in\mathbb{R}_{\ge0}^{\binom{W}{2}} :
    \begin{array}{l}
      m(x,z)\le m(x,y)+m(y,z)\\[-1pt]
      \text{for all }x,y,z\in W
    \end{array}
  \right\}.
\end{equation}
Thus, \(\MET_6\) is the set of semimetrics on \(W\).

For a nonzero \(m\in\MET_6\), its ray is
\(\{\lambda m:\lambda\ge0\}\). This ray is \emph{extreme} if, whenever
\(m=m_1+m_2\) with \(m_1,m_2\in\MET_6\), there exists
\(\alpha\in[0,1]\) such that
\(m_1=\alpha m\) and \(m_2=(1-\alpha)m\).
\begin{table}[!h]
\caption{Graph-metric types in Grishukhin's six-point classification.}
\label{tab:six-metrics}
\centering
\footnotesize
\begin{tabular}{@{}lc@{}}
\toprule
Type & Maximum distance \\
\midrule
\(K_2\) & \(1\) \\
\(K_{2,3}\) with one pair identified & \(2\) \\
\(K_{2,4}\) or \(K_{3,3}\) & \(2\) \\
\(K_{3,3}-e\) & \(3\), attained by one pair \\
\bottomrule
\end{tabular}
\end{table}

For a connected unit-length graph \(Q\) and a map
\(\phi:W\to V(Q)\), write
\(m_{Q,\phi}(x,y):=d_Q\bigl(\phi(x),\phi(y)\bigr)\).
Grishukhin~\cite{grishukhin1992extreme} determined the extreme rays of
\(\MET_6\). After permuting the labels in \(W\), every extreme ray contains
a graph metric \(m_{Q,\phi}\), where
$
  Q\in\{K_2,K_{2,3},K_{2,4},K_{3,3},K_{3,3}-e\}.
$
For the \(K_{2,3}\) type, \(\phi\) identifies precisely one pair of labels:
it maps two labels to one vertex of \(K_{2,3}\) and maps the other four labels
to the remaining four vertices.

A standard extreme-ray decomposition for cones defined by finitely many linear
inequalities implies that every semimetric in \(\MET_6\) is a nonnegative
linear combination of semimetrics on its extreme rays. Table~\ref{tab:six-metrics}
lists the graph-metric types in the classification. Its second column gives
\(\max_{x,y\in W}m_{Q,\phi}(x,y)\) for the unit-length representative.

Fig.~\ref{fig:k33minus} displays \(K_{3,3}-e\), the only type in
Table~\ref{tab:six-metrics} with maximum distance greater than two. The
endpoints of its deleted edge have distance three, whereas every other
unordered pair has distance at most two.

\begin{figure}[!h]
\centering
\resizebox{.7\columnwidth}{!}{%
\begin{tikzpicture}[
  x=1.45cm,
  y=1.10cm,
  line cap=round,
  line join=round,
  vtx/.style={
    circle,
    draw=black,
    fill=white,
    line width=.65pt,
    minimum size=4mm,
    inner sep=0pt,
    font=\scriptsize
  },
  focusv/.style={
    circle,
    draw=red!60!black,
    fill=white,
    text=red!60!black,
    line width=.78pt,
    minimum size=4mm,
    inner sep=0pt,
    font=\scriptsize
  },
  edge/.style={
    draw=black,
    line width=.65pt
  },
  missing/.style={
    draw=red!60!black,
    densely dashed,
    line width=.78pt
  }
]

\coordinate (a1) at (0,1.40);
\coordinate (b1) at (1.50,.62);
\coordinate (a2) at (1.50,-.83);
\coordinate (b2) at (0,-1.52);
\coordinate (a3) at (-1.50,-.83);
\coordinate (b3) at (-1.50,.62);

\draw[edge] (a1)--(b1)--(a2)--(b2)--(a3);
\draw[edge] (b3)--(a1);
\draw[missing]
  (a3)--node[
    midway,
    left=5pt,
    align=right,
    font=\scriptsize,
    text=red!60!black
  ]{deleted edge\\distance \(3\)}(b3);

\draw[edge] (a1)
  .. controls (.34,.55) and (.27,-.72) .. (b2);
\draw[edge] (a2)
  .. controls (.82,-.48) and (-.88,.30) .. (b3);
\draw[edge] (a3)
  .. controls (-.73,-.18) and (.98,.43) .. (b1);

\node[vtx]    at (a1) {$a_1$};
\node[vtx]    at (a2) {$a_2$};
\node[focusv] at (a3) {$a_3$};
\node[vtx]    at (b1) {$b_1$};
\node[vtx]    at (b2) {$b_2$};
\node[focusv] at (b3) {$b_3$};

\end{tikzpicture}%
}
\caption{The graph \(K_{3,3}-e\) in
Table~\ref{tab:six-metrics}. The dashed red segment marks the deleted
edge; its two endpoints have distance three.}
\label{fig:k33minus}
\end{figure}

\begin{lemma}[Six-point graph-metric inequalities]
\label{lem:six-point-inequalities}
Every graph metric induced by a map to one of the five graphs listed in
Table~\ref{tab:six-metrics} satisfies \eqref{eq:metric-inequality}.
\end{lemma}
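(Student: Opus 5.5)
By the Transfer Lemma (Lemma~\ref{lem:transfer}), it suffices to verify the unit-length inequality \eqref{eq:transfer-assumption} on each fixed graph \(Q\in\{K_2,K_{2,3},K_{2,4},K_{3,3},K_{3,3}-e\}\). This inequality must hold for every collection of sessions \((\phi(s_i),\phi(t_i),r_i)\) on \(Q\) and every coding solution on \(Q\). Once it is established for \(Q\), the Transfer Lemma shows that every \(Q\)-metric \(m_{Q,\phi}\) on \(V(G)\) satisfies \eqref{eq:metric-inequality}, whatever the map \(\phi\). In particular this covers maps that identify labels, such as the \(K_{2,3}\) type with one identified pair. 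So the plan is to check \eqref{eq:transfer-assumption} graph by graph, using the bipartite results stated earlier.

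First, \(K_2\), \(K_{2,3}\), \(K_{2,4}\), and \(K_{3,3}\) are connected bipartite graphs of diameter at most two. Any session on them has source--receiver distance at most two, so the hypothesis of \eqref{eq:yin-unit} holds vacuously: no ordered endpoint pair has distance greater than two. Hence \eqref{eq:yin-unit} gives \eqref{eq:transfer-assumption} directly for these four graphs, exactly as in the proof of Corollary~\ref{cor:k2-k23-packing}. Sessions whose endpoints map to the same vertex contribute zero to the right-hand side, so they can be ignored or kept harmlessly.

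The remaining case, \(Q=K_{3,3}-e\), is the main point. Here \(Q\) is still connected and bipartite. As noted with Fig.~\ref{fig:k33minus}, the only unordered pair at distance greater than two is the pair \(\{a_3,b_3\}\) of endpoints of the deleted edge, at distance three. Every session on \(Q\) whose source--receiver distance exceeds two therefore has ordered endpoints \((a_3,b_3)\) or \((b_3,a_3)\). This is precisely the hypothesis of Lemma~\ref{lem:bidirectional} with \(a=a_3\), \(b=b_3\) and \(d_Q(a,b)=3>2\). That lemma yields \eqref{eq:bidirectional-unit}, which is \eqref{eq:transfer-assumption} for \(K_{3,3}-e\).

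The expected obstacle is only this last case. Lemma~\ref{lem:bidirectional} is exactly what is needed, since Yin et al.'s \eqref{eq:yin-unit} alone would not cover sessions in both directions between \(a_3\) and \(b_3\). I would also check that the Transfer Lemma imposes no constraint on which vertices the terminals land on, so the statement holds for arbitrary \(\phi\). Applying Lemma~\ref{lem:transfer} to all five graphs then completes the proof.
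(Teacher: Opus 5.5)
Your proposal is correct and follows the paper's proof: you combine Yin et al.'s inequality \eqref{eq:yin-unit} with the Transfer Lemma for the four diameter-two graphs, and Lemma~\ref{lem:bidirectional} (with $a=a_3$, $b=b_3$) with the Transfer Lemma for $K_{3,3}-e$. Your remarks about arbitrary maps $\phi$ and about sessions whose endpoints map to a single vertex of $Q$ are details the paper leaves implicit; the second is only sketched in your proposal.
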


\begin{proof}
The graphs \(K_2\), \(K_{2,3}\), \(K_{2,4}\), and \(K_{3,3}\) are connected
bipartite graphs of diameter at most two. Therefore,
\eqref{eq:yin-unit} and the Transfer Lemma give the
graph-metric inequality.

In \(K_{3,3}-e\), only the endpoints of the deleted edge have distance greater
than two. Thus, every ordered endpoint pair at distance greater than two has
one of the two orientations of that unordered pair. Our bidirectional
long-pair extension, followed by the Transfer
Lemma, gives the graph-metric inequality in this case.
\end{proof}

\begin{theorem}\label{thm:six}
The multiple-unicast conjecture holds for every undirected network with at
most six coding nodes.
\end{theorem}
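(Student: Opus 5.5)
We plan to reduce a network with at most six coding nodes to a six-point metric and then apply Grishukhin's classification through the divide-and-combine machinery. Let \(W\) be the set of coding nodes, with \(|W|\le 6\); if \(|W|<6\), we pad it with dummy labels, or note that the classification of smaller metric cones is contained in that of \(\MET_6\) by identifying labels. By Theorem~\ref{thm:cost-equivalence}, it suffices to prove the cost-domain inequality~\eqref{eq:cost-conjecture} for every nonnegative \(\ell\) and every coding solution.

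The first step is to remove the non-coding nodes. Nodes that only forward messages, together with the terminals attached to them, can be absorbed by a standard reduction. Each non-coding node's outgoing transmissions are functions of single incoming edges, so its traffic decomposes into routed flow segments between coding nodes (or terminals). We would treat terminals as coding nodes; this is the convention of~\cite{yin2018reduction}, where sources and receivers encode or decode. We then replace each maximal forwarding path by a direct edge between its coding-node endpoints. The new edge has cost equal to the path cost, and its flow is at most the flow on the path. This yields an instance on the vertex set \(W\) whose coding cost is at most the original cost. We then take \(\ell'\) to be the shortest-path metric \(d_{G,\ell}\) restricted to \(W\), extended to all pairs. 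The coding cost under \(\ell'\) is at most the original cost, and the session distances are unchanged. So it suffices to prove~\eqref{eq:cost-conjecture} on the complete graph on \(W\) with edge lengths given by a semimetric \(m\in\MET_6\).

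Next, we decompose \(m=\sum_a\lambda_a m_a\) with \(\lambda_a\ge0\) and each \(m_a\) on an extreme ray of \(\MET_6\). By Grishukhin's classification, after permuting labels each \(m_a\) is a positive multiple of a graph metric \(m_{Q,\phi}\) with \(Q\) one of the five graphs in Table~\ref{tab:six-metrics}. Because the supply graph is complete on \(W\) with lengths exactly \(m\), equality holds in the packing conditions~\eqref{eq:packing-edge} and~\eqref{eq:packing-pair}: \(\sum_a\lambda_a m_a(u,v)=m(u,v)=d(u,v)\) for every pair. This gives a fractional metric packing for every \(\ell\). Lemma~\ref{lem:six-point-inequalities} supplies the graph-metric inequality for each \(m_a\), and Theorem~\ref{thm:divide-combine} (equivalently, the chain~\eqref{eq:divide-combine-edge}--\eqref{eq:divide-combine-distance}) gives the cost inequality on the reduced instance. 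Lifting back through the reduction then gives it on the original network.

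The main obstacle is the reduction step: showing rigorously that non-coding nodes can be eliminated without decreasing the right-hand side or increasing the left-hand side. In particular, we need the reduction to preserve that sessions lie between points of \(W\), which is where the terminal-as-coding-node convention is needed. We would first try to invoke the reduction of Yin~\etal~\cite{yin2018reduction} directly, since their argument uses exactly this contraction before the computer-aided search. If that fails, we would argue directly: for a forwarding node, the transmission on each outgoing edge is a function of one incoming edge's transmission, so the edge transmissions can be charged along paths. By contrast, the metric decomposition and the application of Lemma~\ref{lem:six-point-inequalities} are routine once the reduction is in place.
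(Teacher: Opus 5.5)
Your proposal is correct and follows essentially the paper's route: Yin et al.'s forwarding-node reduction to at most six nodes (with terminals counted as coding nodes), then Grishukhin's extreme-ray decomposition of the shortest-path semimetric, then Lemma~\ref{lem:six-point-inequalities} and Theorem~\ref{thm:divide-combine}. The differences are minor. For fewer than six nodes the paper invokes Theorem~\ref{thm:five}; your alternative also works, namely adding a zero-distance copy of an existing point, decomposing in \(\MET_6\), and restricting, since a restricted \(m_{Q,\phi}\) is still a \(Q\)-metric. Separately, if you argue the reduction directly, you must require, as the paper does, that forwarding nodes never copy symbols: with fan-out, your path-charging would count shared walk segments twice and could increase the cost.
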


\begin{proof}
We use the forwarding-node reduction of
Yin~\etal~\cite[discussion preceding Corollary~4]{yin2018reduction}.
Because a forwarding node neither changes nor copies a symbol, each
transmission between coding nodes follows a walk whose internal nodes are
forwarding nodes. Replacing every such walk by a direct edge of its
shortest-path cost does not increase the coding cost and preserves all
terminal-pair distances. It therefore suffices to prove the assertion for an
arbitrary graph \(G\) with at most six nodes. Fix \(G\), a collection
\(\mathcal H\) of sessions, and a nonnegative edge-cost assignment \(\ell\).

If \(\lvert V(G)\rvert\le5\), then \(\mathcal H\) uses at most five terminal
locations, and Theorem~\ref{thm:five} shows that the conjecture holds for
\(G\). We may therefore assume that \(\lvert V(G)\rvert=6\). Identify \(W\)
with \(V(G)\), and let \(d_{G,\ell}\) be the shortest-path semimetric of
\((G,\ell)\).

The semimetric \(d_{G,\ell}\) belongs to \(\MET_6\). The extreme-ray
decomposition of \(\MET_6\), together with the six-point classification above,
gives graph metrics \(m_a\), each induced by a map to one of the five graphs
in Table~\ref{tab:six-metrics}, and coefficients \(\lambda_a\ge0\) such that
\begin{equation}\label{eq:six-point-decomposition}
  d_{G,\ell}(u,v)
  =
  \sum_a\lambda_a m_a(u,v)
  \qquad (u,v\in V(G)).
\end{equation}

For every supply edge \(uv\), shortest-path optimality gives
\(d_{G,\ell}(u,v)\le\ell(uv)\). Thus,
\eqref{eq:six-point-decomposition} gives a fractional metric packing for
\((G,\ell,\mathcal P(\mathcal H))\): it respects the cost of every supply
edge and gives equality for every terminal pair. By
Lemma~\ref{lem:six-point-inequalities}, every graph metric in this packing
satisfies \eqref{eq:metric-inequality}. Because \(\ell\) was arbitrary, the
hypotheses of Theorem~\ref{thm:divide-combine} are satisfied. Hence the
multiple-unicast conjecture holds for \(G\). The reduction then proves the
assertion of Theorem~\ref{thm:six}.
\end{proof}

This proof replaces the computer-aided search used in the earlier proof with
the finite classification of six-point metrics and the graph-metric
inequalities above.
\section{A Conditional \texorpdfstring{$\nu(D)\le2$}{nu(D)<=2} Result}
\label{sec:boundary}

We now consider instances whose demand graph has matching number at most
two. This condition excludes three sessions whose terminal locations are
pairwise disjoint, while allowing arbitrarily many terminal locations.

\subsection{Demand Graphs with
\texorpdfstring{$\nu(D)\le2$}{nu(D)<=2}}

After isolated vertices are discarded, every demand graph with matching number at most two is a subgraph
of one of four maximal patterns: a graph on at most five vertices, the union
of two stars, the union of \(K_3\) and a star, or \(K_3+K_3\)~\cite{hirai2010metric}. These cases are
illustrated in Fig.~\ref{fig:matching-two-patterns}. In the two star-union
cases, the constituent graphs need not be vertex-disjoint; \(K_3+K_3\)
denotes two vertex-disjoint triangles. Theorem~\ref{thm:karzanov-packings}
covers the first three cases. Thus, only the \(K_3+K_3\) case requires a
separate argument.

\begin{figure}[!htbp]
\centering
\resizebox{.9\columnwidth}{!}{%
\begin{tikzpicture}[
  line cap=round,
  line join=round,
  graphvertex/.style={
    circle,
    fill=black,
    minimum size=3.2pt,
    inner sep=0pt
  },
  graphedge/.style={
    draw=black,
    line width=.55pt
  },
  panel/.style={
    font=\footnotesize
  }
]

\begin{scope}[xshift=1.35cm,yshift=3.05cm]
  \node[panel] at (0,1.27) {(a) \(D\subseteq K_5\)};

  \foreach \i/\x/\y in {
    1/0/.68,
    2/.72/.16,
    3/.45/-.69,
    4/-.45/-.69,
    5/-.72/.16}
    {\coordinate (a\i) at (\x,\y);}

  \foreach \i/\j in {
    1/2,1/3,1/4,1/5,2/3,
    2/4,2/5,3/4,3/5,4/5}
    {\draw[graphedge] (a\i)--(a\j);}

  \foreach \i in {1,2,3,4,5}
    {\node[graphvertex] at (a\i) {};}
\end{scope}

\begin{scope}[xshift=5.15cm,yshift=3.05cm]
  \node[panel] at (0,1.27) {(b) \(D\subseteq S_1\cup S_2\)};

  \coordinate (bL) at (-.60,-.02);
  \coordinate (bR) at (.60,-.08);

  \coordinate (bC1) at (0,.30);
  \coordinate (bC2) at (.02,.10);
  \coordinate (bC3) at (0,-.30);

  \coordinate (bL1) at (-.43,.78);
  \coordinate (bL2) at (-.14,.50);
  \coordinate (bL3) at (-.12,-.68);
  \coordinate (bR1) at (.35,.74);
  \coordinate (bR2) at (.10,-.86);

  \draw[graphedge] (bL)--(bR);

  \foreach \n in {bC1,bC2,bC3}
    {
      \draw[graphedge] (bL)--(\n);
      \draw[graphedge] (bR)--(\n);
    }

  \foreach \n in {bL1,bL2,bL3}
    {\draw[graphedge] (bL)--(\n);}

  \foreach \n in {bR1,bR2}
    {\draw[graphedge] (bR)--(\n);}

  \foreach \n in {
    bL,bR,bC1,bC2,bC3,
    bL1,bL2,bL3,bR1,bR2}
    {\node[graphvertex] at (\n) {};}
\end{scope}

\begin{scope}[xshift=1.45cm,yshift=.05cm]
  \node[panel] at (.10,1.27) {(c) \(D\subseteq K_3\cup S\)};

  \coordinate (c1) at (-.78,.54);
  \coordinate (c2) at (-.78,-.54);
  \coordinate (c3) at (-.12,.00);
  \coordinate (cS) at (.43,.00);

  \coordinate (c4) at (.43,.67);
  \coordinate (c5) at (.98,.45);
  \coordinate (c6) at (1.15,.00);
  \coordinate (c7) at (.98,-.45);
  \coordinate (c8) at (.43,-.67);

  \draw[graphedge] (c1)--(c2);
  \draw[graphedge] (c2)--(c3);
  \draw[graphedge] (c3)--(c1);

  \foreach \n in {c1,c2,c3,c4,c5,c6,c7,c8}
    {\draw[graphedge] (cS)--(\n);}

  \foreach \n in {c1,c2,c3,cS,c4,c5,c6,c7,c8}
    {\node[graphvertex] at (\n) {};}
\end{scope}

\begin{scope}[xshift=5.15cm,yshift=.05cm]
  \node[panel] at (0,1.27) {(d) \(D\subseteq K_3+K_3\)};

  \coordinate (d1) at (-.72,.63);
  \coordinate (d2) at (-1.22,-.43);
  \coordinate (d3) at (-.22,-.43);

  \coordinate (d4) at (.72,.63);
  \coordinate (d5) at (.22,-.43);
  \coordinate (d6) at (1.22,-.43);

  \draw[graphedge] (d1)--(d2);
  \draw[graphedge] (d2)--(d3);
  \draw[graphedge] (d3)--(d1);

  \draw[graphedge] (d4)--(d5);
  \draw[graphedge] (d5)--(d6);
  \draw[graphedge] (d6)--(d4);

  \foreach \n in {d1,d2,d3,d4,d5,d6}
    {\node[graphvertex] at (\n) {};}
\end{scope}

\end{tikzpicture}%
}
\caption{Four graph families covering all demand graphs with matching
number at most two. After isolated vertices are deleted, every such graph
\(D\) is a subgraph of (a) \(K_5\), (b) \(S_1\cup S_2\),
(c) \(K_3\cup S\), or (d) \(K_3+K_3\), where \(S,S_1,S_2\) are
stars. The constituent graphs in (b) and (c) need not be vertex-disjoint;
only representative star leaves are shown.}
\label{fig:matching-two-patterns}
\end{figure}
\subsection{The \(K_3+K_3\) Case}

The known packing for \(K_3+K_3\) uses \(K_2\)-, \(K_{2,3}\)-,
\(K_{3,3}\)-, and \(\Gamma_{3,3}\)-metrics~\cite{hirai2010metric}. The
fixed bipartite graph \(\Gamma_{3,3}\), shown in
Fig.~\ref{fig:gamma33}, is obtained from \(K_{3,3}\) with bipartition
\(A=\{a_1,a_2,a_3\}\) and \(B=\{b_1,b_2,b_3\}\) by subdividing every edge
\(a_i b_j\) with a vertex \(x_{ij}\) and a vertex \(o\) adjacent to
all nine vertices \(x_{ij}\).

\begin{figure}[!bp]
\centering
\resizebox{.58\columnwidth}{!}{%
\begin{tikzpicture}[
  x=.72cm,y=.72cm,
  line cap=round,
  line join=round,
  original/.style={
    circle, draw=black, fill=black,
    minimum size=1.50mm, inner sep=0pt
  },
  subdivision/.style={
    circle, draw=black!72, fill=white,
    line width=.42pt, minimum size=1.50mm, inner sep=0pt
  },
  hub/.style={
    circle, draw=black, fill=white,
    line width=.90pt, minimum size=2.0mm, inner sep=0pt
  },
  outeredge/.style={draw=black, line width=.58pt},
  hubedge/.style={draw=black!28, line width=.38pt}
]

\coordinate (a1) at (0,1.55);
\coordinate (a2) at (4.05,.22);
\coordinate (a3) at (-.55,-2.02);

\coordinate (b1) at (2.25,1.55);
\coordinate (b2) at (-1.95,-.15);
\coordinate (b3) at (2.25,-2.02);

\coordinate (x11) at (1.13,1.57);
\coordinate (x12) at (-.95,.75);
\coordinate (x13) at (2.32,.08);

\coordinate (x21) at (3.15,.93);
\coordinate (x22) at (.42,-1.55);
\coordinate (x23) at (3.10,-.85);

\coordinate (x31) at (-.55,.07);
\coordinate (x32) at (-1.48,-1.12);
\coordinate (x33) at (1.15,-2.04);

\coordinate (o) at (1.15,-.38);

\foreach \x in {x11,x12,x13,x21,x22,x23,x31,x32,x33}
  {\draw[hubedge] (o)--(\x);}

\draw[outeredge] (a1)--(x11)--(b1);
\draw[outeredge] (a1)--(x12)--(b2);
\draw[outeredge] (a1)--(x13)--(b3);

\draw[outeredge] (a2)--(x21)--(b1);
\draw[outeredge] (a2)--(x22)--(b2);
\draw[outeredge] (a2)--(x23)--(b3);

\draw[outeredge] (a3)--(x31)--(b1);
\draw[outeredge] (a3)--(x32)--(b2);
\draw[outeredge] (a3)--(x33)--(b3);

\foreach \x in {x11,x12,x13,x21,x22,x23,x31,x32,x33}
  {\node[subdivision] at (\x) {};}

\node[hub] at (o) {};

\foreach \x in {a1,a2,a3,b1,b2,b3}
  {\node[original] at (\x) {};}

\end{tikzpicture}%
}
\caption{The graph \(\Gamma_{3,3}\). The six filled nodes are the vertices
of the original \(K_{3,3}\); each small outlined node subdivides one edge,
and the thick-outlined node is adjacent to all nine subdivision nodes.}
\label{fig:gamma33}
\end{figure}

\begin{theorem}\label{thm:conditional-matching-two}
If the multiple-unicast conjecture holds for every multiple-unicast instance
on \(\Gamma_{3,3}\), then it holds for every multiple-unicast instance whose
demand graph has matching number at most two.
\end{theorem}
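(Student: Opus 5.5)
We reduce the statement to the divide-and-combine criterion (Theorem~\ref{thm:divide-combine}). Fix an instance \((G,\mathcal H)\) with \(\nu(D(\mathcal H))\le2\). After discarding isolated vertices, \(D(\mathcal H)\) is a subgraph of one of the four maximal patterns in Fig.~\ref{fig:matching-two-patterns}. If it lies in pattern (a), (b), or (c), Theorems~\ref{thm:five} and~\ref{thm:triangle-star} already prove the conjecture. For pattern (b) we use item~3 of Theorem~\ref{thm:karzanov-packings} together with Corollary~\ref{cor:k2-k23-packing}; this recovers the known cut-sufficiency result. The remaining case is \(D(\mathcal H)\subseteq K_3+K_3\). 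The hypothesis about \(\Gamma_{3,3}\) is needed only there.

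In the \(K_3+K_3\) case, we fix an arbitrary nonnegative \(\ell\) and invoke Hirai's packing theorem~\cite{hirai2010metric}. It yields a fractional metric packing for \((G,\ell,\mathcal P(\mathcal H))\) whose metrics are \(K_2\)-, \(K_{2,3}\)-, \(K_{3,3}\)-, and \(\Gamma_{3,3}\)-metrics. We must then verify the graph-metric inequality~\eqref{eq:metric-inequality} for each of them.
\begin{enumerate}
\item \(K_2\), \(K_{2,3}\), and \(K_{3,3}\) are connected bipartite graphs of diameter at most two. Hence \eqref{eq:yin-unit} gives \eqref{eq:transfer-assumption} for every session collection on them, and the Transfer Lemma (Lemma~\ref{lem:transfer}) yields \eqref{eq:metric-inequality}. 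This is exactly as in Lemma~\ref{lem:six-point-inequalities}.
\item For a \(\Gamma_{3,3}\)-metric \(m_{\Gamma_{3,3},\phi}\), we use the hypothesis. If the conjecture holds for every instance on \(\Gamma_{3,3}\), then Theorem~\ref{thm:cost-equivalence} applied with the unit cost assignment \(\ell\equiv1\) on \(\Gamma_{3,3}\) gives
\[
\sum_{xy\in E(\Gamma_{3,3})} f(xy)\ \ge\ \sum_i r_i\, d_{\Gamma_{3,3}}(x_i,y_i)
\]
for every coding solution and every collection of sessions \((x_i,y_i,r_i)\) on \(\Gamma_{3,3}\). The inequality must also cover degenerate sessions with \(\phi(s_i)=\phi(t_i)\). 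These contribute zero on the right and can simply be dropped, since their messages are recovered locally. In particular, \eqref{eq:transfer-assumption} holds for the sessions \((\phi(s_i),\phi(t_i),r_i)\), and the Transfer Lemma gives \eqref{eq:metric-inequality} for \(m_{\Gamma_{3,3},\phi}\).
\end{enumerate}
Since all metrics in the packing then satisfy \eqref{eq:metric-inequality} and \(\ell\) was arbitrary, Theorem~\ref{thm:divide-combine} proves the conjecture for \((G,\mathcal H)\).

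The main obstacle is checking that Hirai's result matches the exact form of Definition~\ref{def:packing}. We must confirm two things. First, the packing is with respect to an arbitrary edge-cost \(\ell\) on an arbitrary supply graph, with the edge constraint \eqref{eq:packing-edge} and equality \eqref{eq:packing-pair} on the demand pairs. Second, the metrics are genuinely of the form \(d_Q(\phi(u),\phi(v))\) with unit edge lengths for a map \(\phi\) defined on all of \(V(G)\). Hirai's metrics might instead be stated only on the terminal set, or as scaled or weighted versions of \(\Gamma_{3,3}\).

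If the metrics are defined only on terminals, I would first extend them to \(V(G)\). One route is the standard tight-span argument: a metric packing on terminals dominated by \(d_{G,\ell}\) extends nodewise, because \(\Gamma_{3,3}\) and the other graphs are the relevant injective (modular) graphs. Alternatively, I would reread Hirai's statement in its "multiflow dual" form, which is already phrased on \(V(G)\). If nonunit scalings occur, I would absorb them into the weights \(\lambda_a\). A further subtle point is that the Transfer Lemma needs the hypothesis for every rate vector and every session collection on \(\Gamma_{3,3}\). This is exactly why the hypothesis of the theorem quantifies over every instance on \(\Gamma_{3,3}\).
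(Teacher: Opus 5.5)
Your proposal is correct and follows the paper's proof essentially step for step. Patterns (a)--(c) are dispatched via Theorem~\ref{thm:karzanov-packings} and Corollary~\ref{cor:k2-k23-packing}; in the \(K_3+K_3\) case you combine Hirai's packing with \eqref{eq:yin-unit} for \(K_2,K_{2,3},K_{3,3}\), the hypothesis plus Theorem~\ref{thm:cost-equivalence} for \(\Gamma_{3,3}\), the Transfer Lemma, and Theorem~\ref{thm:divide-combine}, exactly as the paper does. Your extra care about degenerate sessions with \(\phi(s_i)=\phi(t_i)\), and about whether Hirai's packing matches Definition~\ref{def:packing} exactly, addresses points the paper passes over silently and does not change the argument.
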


\begin{proof}
The first three patterns are covered by
Theorem~\ref{thm:karzanov-packings} and
Corollary~\ref{cor:k2-k23-packing}. It remains to consider
\(D(\mathcal H)\subseteq K_3+K_3\). Fix a nonnegative edge-cost assignment
\(\ell\) and use the \(K_3+K_3\) packing described above.

The hypothesis and Theorem~\ref{thm:cost-equivalence} give
\eqref{eq:yin-unit} with \(Q=\Gamma_{3,3}\). Every pair of vertices in
\(K_2\), \(K_{2,3}\), and \(K_{3,3}\) has distance at most two, so
\eqref{eq:yin-unit} applies on each of these graphs as well. Our Transfer
Lemma, applied to the map defining each metric in the
packing, therefore gives \eqref{eq:metric-inequality} for every such metric.
Since \(\ell\) was arbitrary, Theorem~\ref{thm:divide-combine} establishes
the conditional conclusion.
\end{proof}

Thus, within the \(\nu(D)\le2\) family, it remains to prove the
multiple-unicast conjecture for every collection of independent unicast
sessions on the fixed graph \(\Gamma_{3,3}\).

\section{Conclusion and Discussion}\label{sec:conclusion}

We develop a unified framework that identifies graph metrics, rather than cuts
alone, as the basic objects for comparing coding and routing. Its key
ingredient is our Transfer Lemma, which lifts coding--routing comparisons from
fixed graphs to arbitrary networks through the corresponding graph metrics.
When these metrics form a weighted decomposition of a network's shortest-path
metric, the lifted comparisons combine to prove that the conjecture holds
for that network.

Using this framework, we prove that the conjecture holds for networks with at
most five terminal locations, for three-hole planar networks in which the
endpoints of each session lie on the boundary of one hole, and for networks
whose demand graph is a subgraph of the union of a triangle and a star. We
also give a new proof that the conjecture holds for networks with at most six
coding nodes, replacing computer-aided search with the classification of
six-point metrics.

The framework also isolates two concrete open directions. Among demand graphs
with matching number at most two, the remaining maximal pattern is
\(K_3+K_3\), whose graph-metric decomposition includes
\(\Gamma_{3,3}\)-metrics~\cite{hirai2010metric}. Proving the conjecture on the
fixed graph \(\Gamma_{3,3}\) would settle all instances whose demand graph has
matching number at most two. The other direction concerns four-hole planar
networks, whose known decomposition uses a broader family of graph
metrics~\cite{karzanov1994metrics}. Establishing the corresponding
coding--routing comparisons would extend the framework to this planar setting.

\section*{Use of AI Disclosure}
OpenAI's ChatGPT was used to locate relevant literature and to turn
proof approaches conceived by the authors into initial written drafts.
It also assisted with implementing all figures in TikZ; the mathematical
content and visual design of the figures were determined by the authors.
The authors independently verified every mathematical step and source,
revised all AI-assisted text and code, and take full responsibility for
the manuscript.

\appendices

\section{Bidirectional Long-Pair Extension}
\label{app:bidirectional}

Lemma~\ref{lem:bidirectional} leaves one case to verify: for some nodes
\(a,b\in V(Q)\), every session whose source--receiver distance exceeds two
has ordered endpoints \((a,b)\) or \((b,a)\). This appendix gives the
required layer-counting argument.

Let \(Q=(V_Q,E_Q)\) be a connected bipartite graph with unit edge lengths,
and fix an arbitrary zero-error coding solution on \(Q\). For session \(i\),
let \(X_i\) be its independent source random variable. For every
\(uv\in E_Q\), let \(Y_{uv}\) denote the random variable transmitted from
\(u\) to \(v\). Measure entropy in the same unit as the rates, so
\(r_i=H(X_i)\) and \(f(uv)=H(Y_{uv})+H(Y_{vu})\).

For \(U\subseteq V_Q\), let \(U^{\mathrm{in}}\) and
\(U^{\mathrm{out}}\) be the collections of transmission random variables
entering and leaving \(U\). Write \(S_U:=\{X_i:s_i\in U\}\),
\(T_U:=\{X_i:t_i\in U\}\), and
\(\widehat U:=U^{\mathrm{in}}\cup S_U\). For a collection \(\mathcal A\) of
random variables, let \(\operatorname{Dom}(\mathcal A)\) be the set of
random variables determined by \(\mathcal A\). The Input--Output inequality
states that \(U^{\mathrm{out}}\cup T_U\subseteq
\operatorname{Dom}(\widehat U)\). The Crypto inequality states that
\(X_i\in\operatorname{Dom}(U^{\mathrm{in}}\cup U^{\mathrm{out}})\) whenever
\(\delta_Q(U)\) separates \(s_i\) and \(t_i\)~\cite{jain2006capacity}.
Whether the Crypto inequality applies depends only on the unordered terminal
pair \(\{s_i,t_i\}\), not on the orientation of the session.

Merge all sessions with the same ordered endpoints into one session. The
merged session transmits all of their independent source messages and has
rate equal to the sum of their rates, so~\eqref{eq:bidirectional-unit} is
unchanged. If at most one long ordered endpoint pair remains,
then~\eqref{eq:yin-unit} applies. Otherwise, the two long ordered endpoint
pairs are \((a,b)\) and \((b,a)\), with source variables \(X^+\) and
\(X^-\), respectively. Let \(Z:=(X^+,X^-)\), write
\(D:=d_Q(a,b)>2\), and let \(I_0\) index all remaining sessions. Thus every
session in \(I_0\) has source--receiver distance at most two, and
\(H(Z)=H(X^+)+H(X^-)\).

Let \(M:=\max_{v\in V_Q}d_Q(a,v)\), and set
\(U_j:=\{v\in V_Q:d_Q(a,v)=j\}\) for \(0\le j\le M\). Thus \(a\in U_0\) and
\(b\in U_D\). Since \(Q\) is bipartite and has unit edge lengths, every edge
joins consecutive layers. For \(1\le j\le M\), let
\(B_j:=\bigcup_{q<j}U_q\). Then \(\delta_Q(B_j)\) consists precisely of the
edges between \(U_{j-1}\) and \(U_j\).

Apply the layer-counting argument in the proof of~\eqref{eq:yin-unit} to
\(U_0,\ldots,U_M\), with \(X^+\) as the distinguished long source. Let
\(\mathcal L:=\sum_{uv\in E_Q}f(uv)+\sum_{i=1}^k H(X_i)\) denote the
left-hand side of the resulting combined entropy inequality. That argument
charges every directed transmission entropy at most once and includes one
initial copy of every source entropy.

Here, saying that \(H(X_i)\) is counted \(q\) times means that the
right-hand side contains \(q\) entropy terms that determine \(X_i\).
Because the source variables are independent, these terms contribute
\(qH(X_i)\) to the lower bound. The original layer count gives at least
\(d_Q(s_i,t_i)+1\) such right-hand-side terms for every \(i\in I_0\). We now
show that the \(D+1\) terms associated with \(U_0,\ldots,U_D\) determine
both \(X^+\) and \(X^-\).

For \(j=0\), \(X^+\in S_{U_0}\subseteq\widehat U_0\), while
\(X^-\in T_{U_0}\subseteq\operatorname{Dom}(\widehat U_0)\) by the
Input--Output inequality. Hence \(Z\in\operatorname{Dom}(\widehat U_0)\).
For \(j=D\), the roles of \(X^+\) and \(X^-\) are reversed, and similarly
\(Z\in\operatorname{Dom}(\widehat U_D)\).

Now let \(1\le j\le D-1\). The collection \(\widehat U_j\) contains the
transmissions entering \(U_j\) and determines the transmissions leaving
\(U_j\). It therefore determines both directed transmissions across
\(\delta_Q(B_j)\). Since this cut separates \(a\) and \(b\), the Crypto
inequality gives \(X^+,X^-\in\operatorname{Dom}(\widehat U_j)\), and hence
\(Z\in\operatorname{Dom}(\widehat U_j)\).

Write \(\mathcal A_j:=\operatorname{Dom}(\widehat U_j)\) for
\(0\le j\le D\). The \((D+1)\)-way submodularity step in the layer-counting
argument forms \(D+1\) right-hand-side entropy terms from unions of
intersections of \(\mathcal A_0,\ldots,\mathcal A_D\). Since \(Z\) belongs
to every \(\mathcal A_j\), every one of these \(D+1\) terms determines
\(Z\). Thus \(H(Z)\) is counted \(D+1\) times on the right-hand side.

Let \(\mathcal R_0:=\sum_{i\in I_0}
(d_Q(s_i,t_i)+1)H(X_i)\). The combined layer count gives
\begin{equation}\label{eq:app-final-count}
  \mathcal L
  \ge
  \mathcal R_0+(D+1)H(Z).
\end{equation}
Subtracting the initial copy of every source entropy from
\eqref{eq:app-final-count}, and using \(H(Z)=H(X^+)+H(X^-)\) and
\(D=d_Q(a,b)\), yields~\eqref{eq:bidirectional-unit}. Since the coding
solution was arbitrary, Lemma~\ref{lem:bidirectional} follows.
\begingroup
\renewcommand{\baselinestretch}{0.94}\selectfont
\bibliographystyle{IEEEtran}
\balance
\bibliography{references}
\endgroup
\end{document}